\newcommand{\paper}[1]{}
\newcommand{\report}[1]{#1} 
\title{Weighted Rewriting: Semiring Semantics for Abstract Reduction Systems}
\titlerunning{Weighted Rewriting: Semiring Semantics for Abstract Reduction Systems}
\author{Emma Ahrens}{RWTH Aachen University, Aachen, Germany \and \url{https://www.unravel.rwth-aachen.de/go/id/bdwvwa}}{ahrens@cs.rwth-aachen.de}{https://orcid.org/0000-0002-6394-3351}{}
\author{Jan-Christoph Kassing}{RWTH Aachen University, Aachen, Germany \and \url{https://jckassing.com}}{kassing@cs.rwth-aachen.de}{https://orcid.org/0009-0001-9972-2470}{}
\author{Jürgen Giesl}{RWTH Aachen University, Aachen, Germany \and \url{https://verify.rwth-aachen.de/giesl/}}{giesl@informatik.rwth-aachen.de}{https://orcid.org/0000-0003-0283-8520}{}
\author{Joost-Pieter Katoen}{RWTH Aachen University, Aachen, Germany \and \url{https://moves.rwth-aachen.de/people/katoen/}}{katoen@cs.rwth-aachen.de}{https://orcid.org/0000-0002-6143-1926}{}
\authorrunning{E.\ Ahrens, J.-C.\ Kassing, J.\ Giesl, and J.-P.\ Katoen}
\keywords{Rewriting, Semirings, Semantics, Termination, Verification}
\newcommand{\disabledcmt}[1]{}
\newcommand{\oldcmt}[1]{}
\newcommand{\YG}[1]{\colorbox{white}{${#1}$}}
\newcommand{\IZ}{\mathbb{Z}}
\newcommand{\IN}{\mathbb{N}}
\newcommand{\IR}{\mathbb{R}}
\newcommand{\IS}{\mathbb{S}}
\newcommand{\bound}{C}
\newcommand{\multiSeq}[1]{\mathsf{Seq}(#1)}
\newcommand{\multiX}{\multiSeq{X}}
\newcommand{\multiA}{\multiSeq{A}}
\newcommand{\multiS}{\multiSeq{\IS}}
\newcommand{\maxVar}{\mathsf{max}\VSet}
\newcommand{\AggSet}{\mathcal{F}}
\newcommand{\FT}{\F{T}}
\newcommand{\NFto}{\NF_{\to}}
\newcommand{\fNF}{\mathsf{f}_{\NF}}
\newcommand{\aggrrule}{\mathsf{Aggr}_{\rulesemi}}
\newcommand{\aggr}[1]{\mathsf{Aggr}_{#1}}
\newcommand{\rulesemi}{a\to B} 
\newcommand{\semantics}[3]{\llbracket #1 \rrbracket^{#2}_{#3}}
\newcommand{\semelem}[1]{\semantics{#1}{}{}}
\newcommand{\treeto}{\Rightarrow}
\newcommand{\wARS}{(A, \to, \IS, \fNF, \aggrrule)}
\newcommand{\cplx}[1]{\operatorname{\mathsf{cplx}}(#1)}
\newcommand{\WP}{\mathsf{wp}}
\newcommand{\WLP}{\mathsf{wlp}}
\newcommand{\WPSem}[2]{\mathsf{wp}~\semelem{#1}~(#2)}
\newcommand{\semiplus}{\oplus}
\newcommand{\semiplusbig}{\bigoplus}
\newcommand{\semimult}{\odot}
\newcommand{\semimultbig}{\bigodot}
\newcommand{\seminull}{\mathbf{0}}
\newcommand{\semione}{\mathbf{1}}
\newcommand{\semifin}{T_{\textsf{fin}}}
\newcommand{\semitop}{\top}
\newcommand{\semibot}{\bot}
\newcommand{\semigeq}{\succcurlyeq}
\newcommand{\semileq}{\preccurlyeq}
\newcommand{\semisup}{\bigsqcup}
\newcommand{\semiinf}{\bigsqcap}
\newcommand{\semilong}{(S,\semiplus,\semimult,\seminull,\semione)}
\newcommand{\seminatext}{\IS_{\IN^{\infty}}}
\newcommand{\seminatextlong}{(\IN^{\infty}, +, \cdot, 0, 1)}
\newcommand{\semiconf}{\IS_{\mathsf{conf}}}
\newcommand{\semiconflong}{([0,1], \max, \cdot, 0, 1)}
\newcommand{\semitrop}{\IS_{\mathsf{trop}}}
\newcommand{\semitroplong}{(\IN^{\infty}, \min, +, \infty, 0)}
\newcommand{\semiarc}{\IS_{\mathsf{arc}}}
\newcommand{\semiarclong}{(\IN^{\pm\infty}, \max, +, -\infty, 0)}
\newcommand{\semilang}{\IS_{\Sigma}}
\newcommand{\semilanglong}{(2^{\Sigma^*}, \cup, \cdot, \emptyset, \{\varepsilon\})}
\newcommand{\semibool}{\IS_{\mathbb{B}}}
\newcommand{\semiboollong}{(\{\tfalse,\ttrue\}, \vee, \wedge, \tfalse, \ttrue)}
\newcommand{\semibottle}{\IS_{\mathsf{bottle}}}
\newcommand{\semibottlelong}{(\IR^{\pm\infty}, \max, \min, -\infty, \infty)}
\newcommand{\semirealext}{\IS_{\IR^{\infty}}}
\newcommand{\semirealextlong}{(\IR^{\infty}_{\geq 0}, +, \cdot, 0, 1)}
\def\namedlabel#1#2{\begingroup
    #2%
    \def\@currentlabel{#2}%
    \phantomsection\label{#1}\endgroup
}
\renewcommand{\emph}[1]{\index{#1}\textit{#1}}
\renewcommand{\emptyset}{\varnothing}
\newcommand{\F}[1]{\mathfrak{#1}}
\def\moverlay{\mathpalette\mov@rlay}
\def\mov@rlay#1#2{\leavevmode\vtop{%
   \baselineskip\z@skip \lineskiplimit-\maxdimen
   \ialign{\hfil$\m@th#1##$\hfil\cr#2\crcr}}}
\newcommand{\charfusion}[3][\mathord]{
    #1{\ifx#1\mathop\vphantom{#2}\fi
        \mathpalette\mov@rlay{#2\cr#3}
      }
    \ifx#1\mathop\expandafter\displaylimits\fi}
\newcommand{\Var}{\mathcal{V}}
\newcommand{\TSet}[2]{\mathcal{T}\left(#1,#2\right)}
\newcommand{\VSet}{\mathcal{V}}
\renewcommand{\O}{\mathcal{O}}
\newcommand{\R}{\mathcal{R}}
\newcommand{\W}{\mathcal{W}}
\newcommand{\Dist}{\operatorname{Dist}}
\newcommand{\Geo}{\operatorname{Geo}}
\newcommand{\PP}{\mathcal{P}}
\newcommand{\tplus}{\mathsf{plus}}
\newcommand{\ts}{\mathsf{s}}
\renewcommand{\O}{\mathcal{O}}
\newcommand{\ttrue}{\mathsf{true}}
\newcommand{\tfalse}{\mathsf{false}}
\newcommand{\tidle}{\mathsf{idle}}
\newcommand{\twait}{\mathsf{wait}}
\newcommand{\trun}{\mathsf{run}}
\crefname{definition}{Def.}{Def.}
\crefname{example}{Ex.}{Ex.}
\crefname{counterexample}{Counterex.}{Counterex.}
\crefname{appendix}{App.}{App.}
\crefname{ex}{Ex.}{Ex.}
\crefname{theorem}{Thm.}{Thm.}
\crefname{lemma}{Lemma}{Lemmas}
\crefname{remark}{Rem.}{Rem.}
\crefname{section}{Sect.}{Sect.}
\crefname{subsection}{Sect.}{Sect.}
\crefname{subsubsection}{Sect.}{Sect.}
\crefname{line}{Line}{Lines}
\crefname{corollary}{Cor.}{Cor.}
\crefname{figure}{Fig.}{Fig.}
\crefname{enumi}{}{}
\crefname{algorithm}{Alg.}{Alg.}
\NewDocumentCommand{\dparrow}{+O{} +O{0.5cm}}{%
\begin{tikzpicture}[baseline=-0.5ex] {
\node[inner sep=0](@1) at (-0,0) {};
\node[inner sep=0](@2) at (#2,0) {};
\draw [arrows={-Triangle[open]},shorten >= 1pt,shorten <= 1pt](@1)--(@2) node[pos=.5,above,inner sep=1pt] {\ensuremath{#1}};}
\end{tikzpicture}\xspace\xspace
}
\NewDocumentCommand{\myto}{+O{} +O{0.5cm}}{%
\begin{tikzpicture}[baseline=-0.5ex] {
\node[inner sep=0](@1) at (0,0) {};
\node[inner sep=0](@2) at (#2,0) {};
\draw [arrows={-to}](@1)--(@2) node[pos=.5,above,inner sep=1pt] {\ensuremath{#1}};}
\end{tikzpicture}\xspace
}
\NewDocumentCommand{\paraarrow}{+O{} +O{0.4cm}}{%
\begin{tikzpicture}[baseline=-0.5ex] {
\node[inner sep=0](@1) at (0,0) {};
\node[inner sep=0](@2) at (#2,0) {};
\node[inner sep=0](@3) at (0.07,0) {};
\draw [arrows={-to}](@1)--(@2) node[pos=.5,above,inner sep=1pt] {\ensuremath{#1}};
\draw [arrows={-to}](@1)--(@3);}
\end{tikzpicture}\xspace
}
\NewDocumentCommand{\paradparrow}{+O{} +O{0.4cm}}{%
\begin{tikzpicture}[baseline=-0.5ex] {
\node[inner sep=0](@1) at (0,0) {};
\node[inner sep=0](@2) at (#2,0) {};
\node[inner sep=0](@3) at (0.07,0) {};
\draw [arrows={-Triangle[open]}](@1)--(@2) node[pos=.5,above,inner sep=1pt] {\ensuremath{#1}};
\draw [arrows={-to}](@1)--(@3);}
\end{tikzpicture}\xspace
}
\newcommand{\oset}[2]{%
  {\mathop{#2}\limits^{\vbox to 1\ex@{\kern-\tw@\ex@
   \hbox{\scriptsize #1}\vss}}}}
\newcommand{\osetthree}[2]{%
  {\mathop{#2}\limits^{\vbox to 3\ex@{\kern-\tw@\ex@
   \hbox{\scriptsize #1}\vss}}}}
\newcommand{\osetfive}[2]{%
  {\mathop{#2}\limits^{\vbox to 5\ex@{\kern-\tw@\ex@
   \hbox{\scriptsize #1}\vss}}}}
\newcommand{\osetminus}[2]{%
  {\mathop{#2}\limits^{\vbox to -2\ex@{\kern-\tw@\ex@
   \hbox{\scriptsize #1}\vss}}}}
\newcommand{\mto}{\mathrel{\smash{\stackrel{\raisebox{2.0pt}{\tiny $\mathtt{s}\:$}}{\smash{\rightarrow}}}}}
\renewcommand{\phi}{\varphi}
\renewcommand{\emptyset}{\varnothing}
\newcommand{\NF}{\mathtt{NF}}
\newcommand{\RR}{\mathbb{R}}
\newcommand{\tored}[3]{
  \mathrel{
    \xhookrightarrow{{}_{\scriptstyle #1}}
    \!\!{}^{#2}_{#3}
  }
}
\newcommand{\defemph}[1]{{\rm #1}}
\newcolumntype{L}{>{\RaggedRight\hspace{0pt}}X} 
\begin{document}
\allowdisplaybreaks

\maketitle 

\begin{abstract}
    We present novel semiring semantics for abstract reduction systems (ARSs).
    More precisely, we provide a weighted version of ARSs, where the reduction steps
    induce weights from a semiring.
    Inspired by provenance analysis in database theory and logic,
    we obtain a formalism that can be used for provenance analysis of arbitrary ARSs.
    Our semantics handle (possibly unbounded) non-determinism and possibly infinite reductions.
    Moreover, we develop several techniques to prove upper and lower bounds on
    the weights resulting from our semantics, 
    and show that in this way one obtains a uniform approach to
    analyze several different properties like termination, derivational
    complexity, space complexity, safety,
    as well as combinations of these properties. 
\end{abstract}


\section{Introduction}\label{Introduction}

Rewriting is a prominent formalism in computer science and notions like
termination, complexity, and confluence have been studied for decades for
abstract reduction systems (ARSs). 
Moreover, typical problems in computer science like evaluation of a
database query or a logical formula can be represented as a reduction system.

In this paper, we 
tackle the question whether numerous different analyses in the area of
rewriting, computation, logic, and deduction are ``inherently similar'', i.e., whether 
they can all be seen as special instances of a uniform framework. Analogous research
has been done, e.g., in the  database and logic
community, where there have been numerous approaches to
\emph{provenance analysis}, i.e., 
to not just analyze satisfiability but also explainability of certain results
(see, e.g., \cite{CheneyCT09,dannert2019ProvenanceAnalysisPerspective,tannen2007ProvenanceSemirings}). 
In these works, semirings are often used to obtain information
beyond just satisfiability of a query. For instance, they may be used to compute the
confidence in an answer or to calculate the cost of proving satisfiability.

\begin{example}[Provenance Analysis in Databases]\label{ex:boolean-formula}
    Consider two tables $R$, $P$ in a database over the universe
    $U = \{a,b\}$ with
    $R = U$ and $P = U\times U$ (where each atomic fact, such as $Ra$ or $Pab$, 
    has a cost in $\IN \cup \{\infty\}$),
    and the formula $\psi$ which represents a database query.
    
    \begin{center}
        $R =$ \begin{tabular}{ c|c } 
            & cost\\
            \hline
            $a$ & $2$ \\ 
            $b$ & $\infty$ \\ 
           \end{tabular} \qquad
        $P =$ \begin{tabular}{ cc|cccc|c } 
            & & cost & & & &cost\\
            \hline
            $a$ & $a$ & $2$ & &
            $b$ & $a$ & $\infty$ \\             
            $a$ & $b$ & $7$ & &
            $b$ & $b$ & $10$ \\ 
           \end{tabular} \qquad
        $\psi = Ra \wedge (Pab \vee Pbb)$
    \end{center}

    Our aim is to calculate the maximal cost of proving
    $\psi$. For the alternative use of information,
    i.e., for disjunctions $\psi
    = \phi \vee \phi'$, we take the maximal cost of proving that $\phi$ or
    $\phi'$ are satisfied. For the joint use of information, i.e., for conjunctions
    $\psi = \phi \wedge \phi'$, we take the sum of costs for proving that
    $\phi$ and $\phi'$ hold. We can use the \emph{arctic semiring}
    $\semiarc = \semiarclong$ to formalize this situation (i.e., we consider $\IN \cup \{
    -\infty, \infty \}$ with the operations $\max$ and $+$ whose identity elements are
    $-\infty$ and $0$,  respectively):
    \smallskip
    \begin{itemize}
        \item For an atom $\alpha$, we set $\semelem{\alpha} = c$, where $c$ is the cost
          of the atom $\alpha$.\footnote{We assume that our formulas do not use negation,
          a typical restriction for semiring semantics for logic.} 
        
        \item For formulas $\phi, \phi'$, we set $\semelem{\phi \vee
        \phi'} = \max\{\semelem{\phi}, \semelem{\phi'}\}$ and $\semelem{\phi \wedge
        \phi'} = \semelem{\phi} + \semelem{\phi'}$.
    \end{itemize}
    \smallskip
    This leads to a maximal cost of $\semelem{\psi} = \semelem{Ra} + \semelem{Pab \vee Pbb} =
    2 + \max\{7, 10\} = 12$ for proving $\psi$. 
    We can use a different semiring to calculate a different property. 
    For example, the confidence that a formula $\psi$ holds is computable via the
    \emph{confidence semiring} $\semiconf = \semiconflong$, where all
    atomic facts are given a certain confidence score.
\end{example}

In general, to compute the weight $\semelem{a}$ of an object $a$, 
one defines an \emph{interpretation of the facts or atoms} 
(e.g., the definition of $\semelem{\alpha}$) 
which maps objects to elements of the semiring, and
an \emph{aggregator function} for each reduction
step (e.g., the rules for $\semelem{\phi \wedge \phi'}$ and $\semelem{\phi \vee
  \phi'}$ above)
which operates on elements of the semiring.

In this work we generalize the idea of evaluating a database query
within a given semiring to ARSs. As usual, an ARS is a set $A$ together with a binary relation $\to$ denoting reductions. 
Note that we have to allow reductions from a single object to multiple
ones, as we may have to consider multiple successors (e.g., $\phi$ and $\phi'$ for the
formula $\phi \wedge \phi'$) in order to define the weight of $\phi \wedge \phi'$,
whereas in classical ARSs, objects are reduced to single objects.
This leads to the notion of \emph{sequence ARSs}. 
Similar ideas have been used for probabilistic ARSs \cite{avanzini2020probabilistic,BournezRTA02,bournez2005proving},
where a reduction relates a single object to a
multi-distribution over possible results. 
We will see that probabilistic ARSs can indeed also be expressed using our formalism.

\begin{example}[Provenance Analysis for ARSs]\label{ex:boolean-formula-as-mARS}
    The formulas from \Cref{ex:boolean-formula} fit into the concept of
    sequence ARSs: The set $A$ contains all propositional, negation-free formulas
    over the atomic facts $\NFto = \{ R u_1, P u_1 u_2 \mid u_1, u_2 \in U \} \subset A$ (the normal forms of the relation $\to$) and the relation $\to$ 
    is defined as $\phi \land \psi \to [\phi, \psi]$ 
    and $\phi \lor \psi \to [\phi, \psi]$, 
    where $[\phi, \psi]$ denotes the sequence 
    containing $\phi$ as first and $\psi$ as second element.
    Given a semiring,
    aggregator functions for the reductions steps, and an interpretation of
    the normal forms, we calculate the weight of a formula as in the previous example.
    See \Cref{Semiring Semantics ARS} for the formal definition.
\end{example}

In order to handle arbitrary ARSs, we have to deal with non-terminating reduction sequences
and with (possibly unbounded) non-determinism.
For that reason, to ensure that our semantics are well defined,
we consider semirings where the natural order 
(that is induced by addition of the semiring)
forms a complete lattice.

In most applications of semiring semantics in logic \cite
{dannert2019ProvenanceAnalysisPerspective,tannen2007ProvenanceSemirings}, a
higher ``truth value'' w.r.t.\ the natural order is more desirable, e.g., for
the confidence semiring $\semiconf$ one would like to obtain a value close to
the most desirable confidence $1$. However, in the application of software
verification, it is often the reverse, e.g., for computing the runtime of a
reduction or when considering the costs as in \Cref{ex:boolean-formula} for the arctic semiring $\semiarc = \semiarclong$.
While every weight $s < \infty$ may
still be acceptable, the aim is to prove \emph{boundedness}, i.e., that the
maximum $\infty$ (an infinite cost) cannot occur. For example, boundedness can
imply termination of the underlying ARS, it can ensure that certain bad states
cannot be reached (safety), etc. By considering tuples over different
semirings, we can combine multiple analyses 
into a single combined framework
analyzing combinations of properties, where simply performing each analysis separately fails.
In \Cref{Proving Upper Bounds}, we give sufficient conditions for boundedness 
and show that the interpretation method, 
a well-known method to, e.g., prove termination \cite{baader_nipkow_1998,lankford1979proving,terese2003term} 
or complexity bounds \cite{bonfanteAlgorithmsPolynomialInterpretation2001, bonfante2011QuasiinterpretationsWayControl, hofbauerTerminationProofsLength1989} of ARSs, 
can be generalized to a sound 
and (regarding continuous complete lattice semirings) complete technique to prove boundedness.
On the other hand, it is also of
interest to prove worst-case lower bounds on weights in order to find bugs or potential attacks 
(e.g., inputs which lead to a long runtime). Moreover, if one determines both worst-case upper and
lower bounds, this allows to check whether the bounds are (asymptotically) exact.
Thus, we present a
technique to find counterexamples with maximal weight in \Cref{Proving Lower Bounds}.

So in this paper, we define a uniform framework in the form of \emph{weighted ARSs}, 
whose special instances correspond to different semiring interpretations.
This indeed shows the simila\-rity between numerous analyses in
rewriting, computation, logic, and deduction, because they\linebreak[3]
can all be represented within our new  framework.
In particular, this uniform
framework allows\linebreak[3] to adapt techniques which were developed for one special instance 
in order to use them for other\linebreak[3] special instances.
While the current paper is 
a first (theoretical) contribution in this direction, 
eventually this may also improve the
automation of the analysis for certain instances.

\medskip

\textbf{Main Results of the Paper:}
\begin{itemize}
    \item We generalize abstract reduction systems to a weighted version (\Cref{def:wARS})
      that is powerful enough to express complex notions like termination, complexity, and safety of (probabi\-listic) rewriting, and even novel combinations of such properties (see \Cref{Expressability}).
    \item We provide several sufficient criteria that ensure boundedness
        (\Cref{thm:guaranteed-bounded,thm:guaranteed-bounded1}).
    \item Moreover, we give a sound (and complete in case of continuous semirings) 
      technique based\linebreak[3] on the well-known interpretation method
      to prove boundedness, i.e., to show that the weight of every object in 
      the ARS is smaller than the maximum of the semiring
      (\Cref{thm:boundedness}).
    \item Finally, we develop techniques to approximate the weights (\Cref{thm:approx})
      and to detect counter\-examples that show unboundedness (\Cref{thm:loops}).
\end{itemize}

\textbf{Related Work:} Semirings are actively studied in the database and the
 logic community. See \cite{tannen2007ProvenanceSemirings} for the first paper
 on \emph{semiring provenance} and \cite
 {glavic2021DataProvenance, green2017SemiringFrameworkDatabase} for further
 surveys. Moreover, a uniform framework via semirings has been developed
 in the context of \emph{weighted automata}~\cite{droste2009HandbookWeightedAutomata},
 which has led to a wealth of extensions and practical applications, 
 e.g., in digital image compression and model checking.
 There is also work on semiring semantics for declarative languages
 like, e.g., Datalog \cite{KhamisNPSW24}, which presents properties of the
 semiring that ensure upper\linebreak[3] bounds on how fast a Datalog program can be
 evaluated. Semiring semantics for the lambda calculus have been provided
 in \cite{laird2013WeightedRLambda}. In \cite{BELLE2020181}, a declarative
 programming framework was presen\-ted which unifies the analysis of different
 weighted model counting problems. Within software\linebreak[3] verification, semirings have
 been used in \cite{BatzGKKW22} for a definition of 
\emph{weighted imperative programming}, a Hoare-like semantics, and a
 corresponding weakest (liberal) precondition semantics, and extended to Kleene algebras with tests in \cite{Sedlar23}. The weakest
 precondition semantics of \cite{BatzGKKW22} can also be expressed in our formalism, see
\Cref{WP Comparison}. 
For ARSs, so far only costs in specific semirings have been considered, 
e.g., in \cite{avanzini2020probabilistic,avanziniModularCostAnalysis2020,kop2023CostSizeSemanticsCallByValue,naaf2017ComplexityAnalysisTerm}.
Compared to all this related work, we present the first general
semiring semantics for abstract reduction systems and demonstrate how to use
semirings for analyzing different properties of programs in a unified way.

\medskip

\textbf{Structure:}
We give some preliminaries on abstract reduction systems and semirings in \Cref{Preliminaries}.
Then, we introduce the new notion of \emph{weighted ARSs} in \Cref{Semiring Semantics ARS} that defines semiring semantics for \emph{sequence ARSs}.
We illustrate the expressivity and applicability of this formalism in \Cref{Expressability}.
In \Cref{Proving Upper Bounds}, we show how to prove boundedness. 
Here, we first give sufficient criteria that guarantee boundedness,
and then we introduce the interpretation method for proving boundedness in general.
In \Cref{Proving Lower Bounds}, we discuss the problem of finding a worst-case lower bound on the weight, 
or even a counterexample, 
i.e., we present a technique to find reduction sequences that lead to unbounded weight.
In \Cref{Conclusion}, we conclude and discuss ideas for future work.
Finally, in \Cref{WP Comparison} we show how to express the semantics of \cite{BatzGKKW22} 
in our setting, \report{and all proofs can be found in 
\Cref{Additional Theory}.}\paper{and we refer to \cite{SemiRingReport} for all proofs.}


\section{Preliminaries}\label{Preliminaries}

In this section, we give a brief introduction to abstract reduction systems
and define complete lattice semirings that will ensure well-defined
semiring semantics for ARSs in \Cref{Semiring Semantics ARS}.
The following definition of ARSs adheres to the
commonly used definition, see, e.g., \cite{baader_nipkow_1998,terese2003term}.

\begin{definition}[Abstract Reduction System, Normal Form, Determinism]
    An \defemph{abstract reduction system (ARS)} is a pair $(A, \to)$ with
    a set $A$ and a binary relation $\to\; \subseteq A \times A$.
    \begin{itemize}
        \item An object $a\in A$ \defemph{reduces} to $b$ in a single step,
        abbreviated as $a \to b$, if $ (a,b) \in\; \to$.

        \item An object $a \in A$ is called a \defemph{normal form} if there is
        no $b \in A$ such that $a \to b$. 
        The set of all normal forms for $(A,\to)$ is denoted by $\NFto$.

        \item Finally, the ARS $(A, \to)$ is \defemph{deterministic} if for
        every object $a \in A$ there exists at most one $b \in A$ with
        $a \to b$.
        It is \defemph{finitely non-deterministic}\footnote{When only regarding
        classical ARSs, this notion is often called
        ``finitely branching'' instead. However, since we will regard \emph{sequence} ARSs
        in \Cref{def:multiARS}, in this paper the notion ``finitely branching'' will refer to the branching of their
        reduction trees, see \Cref{def:Reduction Tree}.}  if for every object $a\in A$
        there exist at most finitely many objects $b\in A$ with $a \to b$.
    \end{itemize}
\end{definition}

An important property of ARS is termination, i.e., the absence of infinite behavior.

\begin{definition}[Reduction Sequence, Termination]
    Let $(A,\to)$ be an ARS.
    A \defemph{reduction sequence} is a finite or infinite sequence
    $a_1 \to a_2 \to \cdots$ with $a_i \in A$, and we say that $(A,\to)$ is \defemph{terminating} if there exists
    no infinite reduction sequence.
\end{definition}

In our new notion of weighted rewriting, we \emph{weigh} 
the normal forms in $\NFto$ by elements of a semiring, 
which consists of a set $S$ associated with two operations $\semiplus$ and $\semimult$.

\begin{definition}[Semiring]
    A \defemph{semiring} $\IS$ is a tuple $\semilong$ consisting of a set $S$ (called the
    carrier) together with two binary functions $\semiplus, \semimult : (S \times S) \to S$
    such that $(S, \semiplus, \seminull)$ is a commutative monoid
    (i.e., $\semiplus$ is commutative and associative with identity element $\seminull$),
    $(S, \semimult, \semione)$ is a monoid, and $\semimult$ distributes over
    $\semiplus$. Furthermore, $\seminull$ is a multiplicative annihilator,
    i.e., $\seminull \semimult s = s \semimult \seminull = \seminull$ for all
    $s \in S$.
\end{definition}

Sometimes we write $\semiplus_{\IS}$ or $\seminull_{\IS}$ to clearly indicate the
semiring $\IS$. If it is clear from the context, we also use $\IS$ to denote
the carrier $S$.
In \Cref{Introduction}, we already mentioned some examples of semirings,
namely the confidence semiring $\semiconf$ and the arctic semiring $\semiarc$.
\Cref{table:semirings} lists some relevant semirings for this work. 
Here, the multiplication in the \emph{formal language semiring}
$\semilang$ is pairwise concatenation, i.e., for $P_1, P_2 \subseteq \Sigma^*$, we have
$P_1 \cdot P_2 = \{ uv \mid u \in P_1, v \in P_2 \}$.

Later, in \Cref{Proving Upper Bounds,Proving Lower Bounds}, 
we establish upper and lower bounds on the weight of a given reduction
sequence, respectively. Hence, we need an order on the elements in the
semiring. Additionally, the order should be defined in a way that guarantees
well-definedness of our semantics. We accomplish this by using the natural
order\footnote{One can also generalize our results to partially ordered
semirings, where the partial order is \emph{compatible} with addition and
multiplication, i.e., addition and multiplication are monotonic (see \Cref{def:monotonic}). Note that the natural order is the least (w.r.t.\ $\subseteq$)
such partial order.} induced by the addition $\oplus$.

\begin{figure}
    \small
    \centering
    {\setcellgapes{1.8pt}
    \makegapedcells
    \begin{tabular}{llll}
        $\seminatext$ &$= \seminatextlong$ &
        $\semirealext$ &$= \semirealextlong$ \\
        $\semitrop$ &$= \semitroplong$ &
        $\semiarc$ &$ = \semiarclong$ \\
        $\semibool$ &$ = \semiboollong$ &
        $\semiconf$ &$= \semiconflong$ \\
        $\semibottle$ &$= \semibottlelong$ &
        $\semilang$ &$= \semilanglong$ \!
    \end{tabular}}
    \caption{Non-exhaustive list of complete lattice semirings $\IS = \semilong$.}
    \label{table:semirings}
   \end{figure}

\begin{definition}[Natural Order]
    The \defemph{natural order} $\semileq$ on a semiring $\IS$ is defined for all
    $s, t\in \IS$ as $s \semileq t$ iff there exists an element $u \in \IS$ with
    $s \semiplus u = t$. A semiring is called \defemph{naturally ordered} if
    $\semileq$ is a partial order, i.e., reflexive, transitive, and
    antisymmetric.\footnote{\label{ReflexiveTransitive}By definition, $\semileq$ is reflexive
    since $s \semiplus \seminull = s$, 
    and transitive since $s \semiplus v = t$ and $t \semiplus w = u$ imply
    $s \semiplus (v \semiplus w) = u$ by associativity of $\semiplus$.
    So the only real requirement is antisymmetry.} 
\end{definition}

The lack of ``negative elements'' in semirings 
ensures that we can define the natural order,
because as soon as there exists an element different 
from $\seminull$ with an additive inverse, 
the relation $\semileq$ is not antisymmetric anymore. 
Every semiring in \Cref{table:semirings} is naturally ordered. 
If the additive operation is addition or maximum,
then the order corresponds to the usual order on the extended naturals or extended reals. 
The natural order on $\semilang$ is the subset relation 
(${\semileq_{\semilang}} = {\subseteq}$).
Moreover, since the additive operation in the \emph{tropical semiring} $\semitrop$ is the minimum, 
the natural order in this semiring is the reverse of the usual order.

Our semantics in \Cref{Semiring Semantics ARS} consider demonic non-determinism.
Hence, to analyze worst-case behavior, we want to take the least upper bound
over all (possibly uncountably many) schedulers to resolve all non-determinism,
i.e., we want to take the least upper bound of arbitrary (possibly uncountable)
sets. A partially ordered set, where the least upper bound exists for every two
elements, is called a \emph{join-semilattice}, see e.g., \cite
{Abramsky94}. Since we also need the existence of least upper bounds for
infinite uncountable sets, we require \emph{complete} lattices.\footnote
{Lattices are semilattices, where in addition to suprema also infima are
defined. While we do not need infima for our semantics, the existence of infima
is guaranteed if one assumes the existence of suprema for all
(possibly uncountable) subsets\report{, see \Cref{lem:semilattice} in \Cref
{Additional Theory}.}\paper{, see \cite{SemiRingReport}.}}

\begin{definition}[Complete Lattice] \label{def:complete-lattice}
    A naturally ordered semiring $\IS$ is a \defemph{complete lattice} 
    if the \defemph{least upper bound} (or \defemph{supremum})
    $\semisup T \in \IS$ exists for every set $T\subseteq \IS$.
\end{definition}

All semirings in \Cref{table:semirings} are naturally ordered and complete lattices. 
A complete lattice semiring does not only have a 
minimum\footnote{Already in naturally ordered semirings, 
$\seminull$ is the minimum: for all $s\in \IS$, we have $\seminull \semileq s$,
since $\seminull \semiplus s = s$.} $\seminull = \semibot
= \semisup \emptyset \in \IS$, 
but also a maximum $\semitop = \semisup S \in \IS$.
Furthermore, the existence of every supremum allows us to define infinite sums
and products of sequences, see, e.g.,
\cite{brinke2024SemiringProvenanceInfinite}.

We define sequences $T = (x_i)_{i\in I} = [x_1, x_2, \ldots] \subseteq \IS$,
where either $I = \{ i \in \IN \mid 1 \leq i \leq n\}$
for some $n \in \IN$
(then $T$ is a finite sequence of length $n$)
or $I = \IN_{\geq 1}$ (then $T$ is an infinite sequence). By $\multiX$, we denote the
set of all non-empty sequences over some set $X$. Furthermore, the subset relation
between two sequences $T = (x_i)_{i\in I}, T' = (x_i')_{i\in I'} \in \multiX$
is defined via prefixes, i.e., we write $T \subseteq T'$ if $I \subseteq I'$ and $x_i = x_i'$ for all $i\in I$.
For a finite sequence $T = [s_1,  \ldots, s_n] \in \multiS$, we
use the common abbreviation $\semiplusbig T = \semiplusbig_{i = 1}^{n} s_i =
s_1 \semiplus \cdots \semiplus s_n$ and $\semimultbig T = \semimultbig_{i =
1}^{n} s_i = s_1 \semimult \cdots \semimult s_n$.

\begin{definition}[Infinite Sums and Products] \label{def:infinite_operations}
    Let $\IS$ be a complete lattice semiring and $T$ an infinite sequence\footnote{Note that one typically defines sums for sets and not for sequences in provenance analysis. 
    However, we use the order given by the sequence for our semantics in \Cref{Semiring Semantics ARS}.} over $\IS$.
    Then we define the infinite sum and product of $T$ as

    \vspace*{-.4cm}

    {\small 
    \[\textstyle \hspace*{-.4cm} \semiplusbig T = \semisup \left\{ \semiplusbig \semifin \mid
    \semifin \text{ is a finite prefix of } T \right\},\quad 
    \semimultbig T = \semisup \left\{ \semimultbig \semifin \mid
    \semifin \text{ is a finite prefix of }  T \right\}.\]}\vspace*{-.6cm}
\end{definition}

Infinite sums and products are well defined, since the supremum of any set
exists in the complete lattice $\IS$, hence $\semiplusbig T \in \IS$ and
$\semimultbig T \in \IS$ for all infinite sequences $T$ over $\IS$.


\section{Semiring Semantics for Abstract Reduction Systems}\label{Semiring Semantics ARS}

In this section, we introduce \emph{weighted abstract reduction systems} 
by defining semiring semantics for ARSs 
and show that these semantics are well defined for complete lattice semirings. 
Our definitions are in line with provenance analysis as introduced in \Cref{ex:boolean-formula}. 
From now on, whenever we speak of a ``semiring'' we mean a complete lattice
semiring.

As for ordinary ARSs, we represent the relation $\to$ via rules which are selected
non-deter\-mi\-nis\-ti\-cally.
However, each rule can have multiple outcomes,
as in \Cref{ex:boolean-formula-as-mARS}. 
Syntactically, for a \emph{sequence abstract reduction system},
we use a relation $\to$ that relates a single object to a
sequence of all corresponding outcomes that we later
weigh using semiring elements. Note that sequences are necessary for reductions like
$\psi \land \psi \to [\psi,\psi]$, where both incarnations of $\psi$ may
reduce to different objects due to possible non-determinism.

\begin{definition}[Sequence Abstract Reduction System] \label{def:multiARS}
    A \defemph{sequence abstract reduction system} 
    (sARS) is a pair $(A, \to)$ consisting of a set $A$ 
    and a binary relation $\to\; \subseteq A \times \multiA$.
    \begin{itemize}
        \item We write $a \to B = [b_1, b_2, \ldots]$ if
        $B \in \multiA$ and $(a,B)\in\;\to$. Normal forms and
        $\NFto$ are defined as for ARSs.

        \item The sARS $(A, \to)$ is \defemph{deterministic} if for every $a \in A$ 
        there is at most one $B$ with $a \to B$ and  \defemph{finitely
        non-deterministic} if  
        for each $a \in A$, there are only finitely many $B
        \in \multiA$ with $a \to B$.
        The sARS is \defemph{finitely branching}
        if for every
        $a \to B$, the sequence $B$ is finite.
    \end{itemize}
\end{definition}

We have already seen an example of an sARS in \Cref{ex:boolean-formula-as-mARS}.
In ordinary ARSs, it suffices to consider reduction
sequences. When using sARSs, we obtain ordered reduction trees instead.

\begin{definition}[Reduction Tree]\label{def:Reduction Tree}
    Let $(A,\to)$ be an sARS.
    An \defemph{$(A,\to)$-reduction tree  ($(A,\to)$-RT)}
    $\FT\!=\!(V,E)$ is a labeled, ordered tree with nodes $V$ and directed edges $E\!\subseteq\!V\!\times\!V$, where
    \begin{itemize}
        \item every node $v\in V$ is labeled by an object $a_v \in A$ and
        \item every node $v$ together with its sequence of direct successors $vE = [w \in V \mid (v,w) \in E]$ either corresponds to a reduction step $a_v \to [a_w \mid w\in vE]$ or $vE$ is empty.
    \end{itemize}
    We say that $(A,\to)$ is \defemph{terminating} if all $(A,\to)$-RTs have finite
    depth.\footnote{One could
    instead attempt to define reduction trees in an inductive way. Then
    every node labeled with a value from $A$ would be a reduction tree and
    one could
    lift the reduction $\to$ to a binary relation $\treeto$ which
    extends reduction trees, i.e., 
    $\FT \treeto \FT'$ holds if there is a leaf $v$ of $\FT$
    and a reduction step $a_v \to B$ such that the reduction tree $\FT'$ extends $\FT$ by
    new leaves $w_b$ with $a_{w_b} = b$ and edges $(v,w_b)$ 
    for all $b\in B$. However, in this way one would only obtain reduction trees of finite
    depth, whereas we also need reduction trees of infinite depth in order
    to represent non-terminating reductions, which would require an additional limit step
    in the construction above.}
 \end{definition}

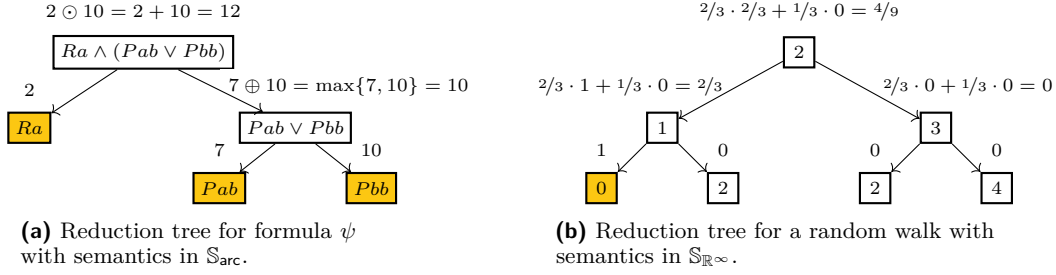
\begin{figure}
    \centering
    \begin{subfigure}{0.45\textwidth}
      \scriptsize \center \vspace*{-0.1cm}
      \hspace*{-.5cm}
      \begin{tikzpicture}
            \tikzstyle{myRect}=[thick,draw=black!100,fill=white!100,minimum size=4mm, shape=rectangle]
            \tikzstyle{trueRect}=[thick,draw=black!100,fill=lipicsYellow!100,minimum size=4mm, shape=rectangle]

            \node[myRect, label=above:{$\YG{2\semimult 10 = 2 + 10 = 12}$}] at (0, 0)
                (d) {$Ra \wedge (Pab \vee Pbb)$};

            \node[trueRect, label=above:{$\YG{2}$}] at (-1.5, -1)
                (dl) {$Ra$};
            \node[myRect, label=above:{$\qquad\qquad\quad\YG{7\semiplus 10 = \max\{7,10\} =
                  10}$}] at (2, -1) 
                (dr) {$Pab \vee Pbb$};
            \node[trueRect, label=above:{$\YG{7}$}] at (1, -1.8)
                (drl) {$Pab$};
            \node[trueRect, label=above:{$\YG{10}$}] at (3, -1.8)
                (drr) {$Pbb$};

            \draw (d) edge[->] (dl);
            \draw (d) edge[->] (dr);

            \draw (dr) edge[->] (drl);
            \draw (dr) edge[->] (drr);
      \end{tikzpicture}
      \subcaption{Reduction tree for formula $\psi$ \\
        with semantics in $\semiarc$.}
      \label{fig:red-tree-1}
    \end{subfigure}
    \hspace*{.5cm}
    \begin{subfigure}{0.45\textwidth}
      \scriptsize \center
      \hspace*{-.5cm}
      \begin{tikzpicture}
            \tikzstyle{myRect}=[thick,draw=black!100,fill=white!100,minimum size=4mm, shape=rectangle]
            \tikzstyle{trueRect}=[thick,draw=black!100,fill=lipicsYellow!100,minimum size=4mm, shape=rectangle]

            \node[myRect, label=above:{$\YG{\nicefrac{2}{3}\cdot\nicefrac{2}{3}
            + \nicefrac{1}{3}\cdot 0 = \nicefrac{4}{9}}$}] at (0, 0)
                (d) {$2$};

            \node[myRect, label=above:{$\YG{\nicefrac{2}{3}\cdot 1 + \nicefrac{1}{3}\cdot 0 = \nicefrac{2}{3}}\qquad\quad$}] at (-1.8, -1)
                (dl) {$1$};
            \node[myRect, label=above:{$\qquad\quad\YG{\nicefrac{2}{3}\cdot 0 + \nicefrac{1}{3}\cdot 0 = 0}$}] at (1.8, -1)
                (dr) {$3$};

            \node[trueRect, label=above:{$\YG{1}$}] at (-2.6, -1.8)
                (dll) {$0$};
            \node[myRect, label=above:{$\YG{0}$}] at (-1, -1.8)
                (dlr) {$2$};
            \node[myRect, label=above:{$\YG{0}$}] at (1, -1.8)
                (drl) {$2$};
            \node[myRect, label=above:{$\YG{0}$}] at (2.6, -1.8)
                (drr) {$4$};

            \draw (d) edge[->] (dl);
            \draw (d) edge[->] (dr);

            \draw (dl) edge[->] (dll);
            \draw (dl) edge[->] (dlr);
            \draw (dr) edge[->] (drl);
            \draw (dr) edge[->] (drr);
      \end{tikzpicture}
      \subcaption{Reduction tree for a random walk with\\ 
        semantics in $\semirealext$.}
      \label{fig:red-tree-2}
    \end{subfigure}
    \caption{Two example reduction trees, where each node $v$ is labeled with
    $a_v\in A$ and the small numbers are the corresponding weights $\semantics{\FT}{v}{}$.
    Colored nodes are labeled by normal forms.}
    \label{fig:reduction_trees}
\end{figure}

\Cref{fig:red-tree-1}
depicts a reduction tree for the formula $\psi$ from \Cref{ex:boolean-formula}
and \Cref{fig:red-tree-2} shows a reduction tree for a biased random walk starting at 2, see
\cref{ex:probabilistic-analysis}.
We define the weight of a reduction tree w.r.t.\ a
semiring $\IS$ by interpreting the leaf nodes as semiring elements and the
inner nodes as combinations of its children.
We use so-called \emph{aggregator functions} to combine 
weights occurring in reductions based on the semiring addition and multiplication.

\begin{definition}[Aggregator] \label{def:aggregatorFunction}
    Let $\IS$ be a semiring and $\VSet = \{ v_1, v_2, \ldots \}$ be a set of variables.
    Then the set of all \defemph{aggregators} $\AggSet$ (over $\IS$ and $\VSet$) 
    is the smallest set with
  \begin{itemize}
        \item $s \in \AggSet$ for every $s \in \IS$ (constants) and $v \in \AggSet$ for every $v \in \VSet$ (variables),

        \item $\semiplusbig F \in \AggSet$ (sums) and $\semimultbig F \in \AggSet$ (products) for every $F \in \multiSeq{\AggSet}$.
    \end{itemize}
    \noindent
    If an aggregator is constructed via finite sequences $F$,
    then it is a \defemph{finite aggregator}.
    Let $\VSet(f)$ be the set of all variables in $f \in \AggSet$ and let
    $\maxVar(f) = \sup\{i \mid v_i \in \VSet(f)\} \in \IN^{\infty}$.

    Let $f \in \AggSet$ and $n \in \IN^\infty$ with $n \geq \maxVar(f)$.
    Then the aggregator $f$ \defemph{induces a function}
      $f: \IS^n \to \IS$ in the obvious way: For a
    sequence $T = [s_1, \ldots] \in \multiSeq{\IS}$
    of length $n$, we have
    $s(T) = s$ for constants $s$, $v_i(T) = s_i$ for variables $v_i$ 
    and  $1 \leq i \leq n$, 
    and $(\bigcirc F)(T) = \bigcirc [f_1(T), f_2(T), \ldots]$
    for $F = [f_1, f_2, \ldots]$, where $\bigcirc \in \{\semiplusbig, \semimultbig\}$.
\end{definition}

Weighted rewriting considers an sARS $(A, \to)$ together with a semiring $\IS$,
and functions $\fNF$ and $\aggrrule$ in order to map objects from $A$ to elements of $\IS$.

\begin{definition}[Weighted Abstract Reduction System]\label{def:wARS}
    A tuple $\wARS$ is a \defemph{weighted abstract reduction system (wARS)} if \\[4pt]
\begin{minipage}[t]{0.35\textwidth}
    \begin{itemize}
        \item $(A, \to)$ is an sARS,
        \item $\IS$ is a semiring,
    \end{itemize}
\end{minipage} \hspace*{.5cm}
\begin{minipage}[t]{0.60\textwidth}
    \begin{itemize}
        \item $\fNF: \NFto \rightarrow \IS$ is the \defemph{interpretation of normal forms},
        \item $\aggrrule \in \AggSet$ 
          is the \defemph{aggregator} for every $\rulesemi$,
    \end{itemize}
\end{minipage}\\[4pt]
where $\maxVar(\aggrrule) \leq |B|$.
\end{definition}

For every $\aggrrule$, we consider the induced function
$\aggrrule: \IS^n \to \IS$  of arity $n = |B|$. 
Now we introduce our semiring semantics for reduction trees of finite depth
by using $\fNF$ to interpret
the leaves of a reduction tree that are labeled by normal forms. To interpret inner nodes,
we use aggregator functions that
distinguish between the different reductions. So
for the example from 
\Cref{ex:boolean-formula-as-mARS,fig:red-tree-1}, we use 
a function $\fNF$ where
$\fNF(\alpha)$ is the cost of atom $\alpha$, and 
we use $\aggr{\phi \wedge \psi \to
  [\phi, \psi]} = v_1 \semimult v_2$ for every rule $\phi \wedge \psi \to [\phi,\psi]$ and
$\aggr{\phi \vee \psi \to
[\phi, \psi]} = v_1 \semiplus v_2$ for every rule 
$\phi \vee \psi \to [\phi,\psi]$.
Thus, combining the weights of the children via aggregator functions enables us to
calculate a weight for the root of any reduction tree
with possibly infinite (countable) branching and finite depth.\footnote{Alternatively, one
could also consider finite-depth reduction trees as first-order ground terms (with function
symbols of possibly infinite arity). Then the semiring semantics of
\Cref{def:semiringSemantics} would correspond to 
a polynomial interpretation where the polynomials 
 $\aggr{a_v\to B}$
are constructed using the operations
$\semiplus$ and $\semimult$ of the semiring.}

\begin{definition}[Semiring Semantics] \label{def:semiringSemantics}
    For a wARS $\wARS$ and an 
    $(A,\to)$-RT $\FT=(V,E)$ of finite depth,
    we define the \defemph{weight}
    $\semantics{\FT}{v}{}$ of $\FT$ at node $v\in V$ as\linebreak[3]

\vspace*{-.52cm}
    
    {\small\begin{align*}
        \semantics{\FT}{v}{} &= \fNF(a_v) \; &&\text{if $a_v \in \NFto$}\\
        \semantics{\FT}{v}{} &= \seminull && \text{if $v$ is a leaf
and $a_v \notin \NFto$}\\   
        \semantics{\FT}{v}{} &= \aggr{a_v\to B}\left[ \semantics{\FT}{w}{} \mid w \in vE \right]
        &&\text{if $v$ is an inner node and } B = [a_w \mid w \in vE].
    \end{align*}}

    \vspace*{-.1cm}

    \noindent
      The weight of the whole RT $\FT$ is $\semantics{\FT}{}{}
    = \semantics{\FT}{r}{}$, where $r \in V$ is the root \pagebreak[3] node of $\FT$.    
\end{definition}

Note that the order of the children  $B = [a_w \mid w \in vE]$ is crucial when applying
$\aggr{a_v\to B}$ to $\left[\semantics{\FT}{w}{} \mid w \in vE \right]$.
This is needed, e.g., when an aggregator is used to
represent different probabilities for the successor objects in a biased random walk, see,
e.g.,
\Cref{fig:red-tree-2} and \Cref{Expressability:Probabilistic}.

A reduction tree
of finite depth represents one possible execution of the sARS up to a certain number of
steps, where the non-determinism is resolved by some fixed scheduler.
To define the semantics of an object $a \in A$, we consider any number of reduction steps
and all possible schedulers.
Then we define the weight of $a$ as  the least upper
bound of the weights of all finite-depth reduction trees whose root is labeled with $a$.

\begin{definition}[Semantics with Demonic Non-Determinism]\label{def:element-semiring-semantics}
  For a wARS $\W = (A, \to, \IS, \fNF,\linebreak[3] \aggrrule)$ and $a \in A$,  let
  $\Phi(a)$ be the set of all $(A,\to)$-reduction trees of finite depth
  whose root node is
  labeled with $a$. Then we define
  the \defemph{weight} of $a$ as
  $\semelem{a} = \semisup \{ \semantics{\FT}{}{} \mid \FT \in  \Phi(a)\}$.\footnote{In
  principle, $\semelem{a}$,  $\semantics{\FT}{}{}$, and $\semantics{\FT}{v}{}$
  are indexed by $\W$, 
  but we omitted this index for readability.}
\end{definition}

Due to possibly uncountably many schedulers, 
there might be uncountably many reduction trees each with a different weight\report{ (see
\Cref{lem:nondeterminism} in \Cref{Additional Theory}).}\paper{, see \cite{SemiRingReport}.}
Nevertheless, since $\IS$ is a complete lattice, the supremum
of every set exists and thus,
the weight of every object is well defined.

\begin{corollary}[Well-Defined Semantics]\label{thm:well-defined}
  For any wARS $\wARS$,
  the weight $\semelem{a}$ is well defined for every object $a\in A$.
\end{corollary}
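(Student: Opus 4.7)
The plan is to verify well-definedness in two stages, corresponding to the two-level structure of Definition~\ref{def:element-semiring-semantics}: first that every individual $\semantics{\FT}{}{}$ is a well-defined element of $\IS$, and second that the supremum of the resulting set exists.

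For the first stage, I would fix any $(A,\to)$-RT $\FT=(V,E)$ of finite depth and prove by induction on the depth $d(v)$ of a node $v$ that $\semantics{\FT}{v}{}\in\IS$ is uniquely defined. In the base case ($d(v)=0$, so $v$ is a leaf), the value is either $\fNF(a_v)$ or $\seminull$, both of which are elements of $\IS$ by assumption. For the inductive step, $v$ is an inner node with successor sequence $vE$ corresponding to a reduction $a_v\to B$ with $B=[a_w\mid w\in vE]$. By the induction hypothesis, each child weight $\semantics{\FT}{w}{}$ is a well-defined element of $\IS$, so the input sequence $[\semantics{\FT}{w}{}\mid w\in vE]\in\multiS$ is well-defined. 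I then need to check that applying the aggregator function $\aggr{a_v\to B}$ to this sequence is well-defined as an element of $\IS$.

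The main obstacle is precisely this application step: the sequence $vE$ may be infinite (countable) when $(A,\to)$ is not finitely branching, and then the aggregator may involve infinite sums and products. Here I would invoke Definition~\ref{def:infinite_operations}: because $\IS$ is a complete lattice semiring, both $\semiplusbig T$ and $\semimultbig T$ are well-defined for any infinite sequence $T$ over $\IS$ as suprema of finite prefixes. A straightforward structural induction on the aggregator $f\in\AggSet$ (constants map to themselves, variables project out an entry of the input sequence which exists since $|B|\geq\maxVar(\aggrrule)$ by the side condition in Definition~\ref{def:wARS}, and $\semiplusbig F$, $\semimultbig F$ reduce to the infinite-sum/product construction over already-well-defined entries) then shows that $\aggr{a_v\to B}[\semantics{\FT}{w}{}\mid w\in vE]\in\IS$ is uniquely determined.

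For the second stage, the set $M(a)=\{\semantics{\FT}{}{}\mid \FT\in\Phi(a)\}$ is by the first stage a subset of $\IS$. Since $\IS$ is a complete lattice semiring, Definition~\ref{def:complete-lattice} guarantees that $\semisup M(a)\in\IS$ exists, regardless of whether $\Phi(a)$ (and hence $M(a)$) is finite, countable, or uncountable. Thus $\semelem{a}$ is well defined, which is the claim. Altogether the argument is short; the only subtle point is the infinite-arity case of the aggregator, which is exactly why the paper restricted attention to complete lattice semirings in the preliminaries.
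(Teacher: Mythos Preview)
Your argument is correct and matches the paper's reasoning, which treats this corollary as immediate from the complete-lattice assumption (no separate proof is given beyond the sentence preceding the statement). One minor correction: the induction in your first stage should be on the \emph{height} of $v$ (the length of the longest path from $v$ down to a leaf), not its depth from the root---with depth, children have \emph{larger} index than their parent, so the induction hypothesis would not be available for them. The paper's other proofs (e.g., of \Cref{thm:guaranteed-bounded1} and \Cref{thm:boundedness}) use height for exactly this reason; with that fix your write-up is fine.
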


The set $\Phi(a)$ takes on different shapes depending on the reduction system.
If the reduction system is deterministic, then $\Phi(a)$ consists of
all finite-depth prefixes of a single (potentially infinite-depth) tree.
If the sARS is non-deterministic,
then $\Phi(a)$ may contain uncountably many trees.
The maximal size of the sequences in the reduction rules determines
the maximal branching degree of the trees. If the sARS is finitely branching, so are the trees in
$\Phi(a)$.

The computation of the weight $\semelem{a}$ is undecidable in general, 
since computing single steps with $\to$ may already be undecidable.
However, even if the reductions $\rulesemi$, the interpretation of the normal forms
$\fNF$, and the aggregator functions $\aggrrule$ are computable,
computing the weight $\semelem{a}$ can still be undecidable, because it can express notions like
termination of deterministic systems as demonstrated in the
next section (\Cref{Expressability:Runtime}).


\section{Expressivity of Semiring Semantics}\label{Expressability}

In this section we give several examples to demonstrate 
the versatility and expressive power of our  new formalism,
and show that existing approaches for the analysis of reduction systems 
actually consider specific semirings.

\subsection{Termination and Complexity}\label{Expressability:Runtime}

We can extend any ARS $(A,\to)$ to a wARS 
$\cplx{A,\to} = (A, \mto, \seminatext, \fNF^{\mathsf{cplx}}, 
\mathsf{Aggr}_{a \mto B}^{\mathsf{cplx}})$
such that $\semelem{a}$ is equal to the supremum over 
the lengths of all reduction sequences starting in $a \in A$.\linebreak[3]
For this, we use the sequence relation $\mto \;= \{a \mto [b] \mid a \to b\}$,
the \emph{extended naturals semi\-ring} $\seminatext$,
the interpretation 
$\fNF^{\mathsf{cplx}}(a) = \seminull_{\seminatext} = 0$ for all $a \in \NF_{\mto}$,
and the aggregator $\mathsf{Aggr}_{a \mto [b]}^{\mathsf{cplx}}\linebreak[3]
 =  1 \semiplus_{\seminatext} v_1 = 1 + v_1$ whenever $a \mto [b]$.
Recall that aggregators use a fixed set of variables $\Var = \{v_1, \ldots\}$.
The derivational complexity of $(A,\to)$ 
(i.e., the supremum of the lengths of possible reduction sequences) 
is obtained by analyzing 
the weights $\semelem{a}$ of $\cplx{A,\to}$.

Techniques for automatic complexity and termination analysis have been developed for, e.g., term rewrite
systems (TRSs) in the literature \cite{baader_nipkow_1998,terese2003term}, and there is
an annual \emph{Termination and 
Complexity Competition} with numerous participating tools \cite{termcomp}. 
In term rewriting, \pagebreak[3] one considers terms $t \in \TSet{\Sigma}{\VSet}$ over
a set of function symbols $\Sigma$
and a set of variables $\VSet$. The reduction relation $\to_{\R}$ is defined via a set of rewrite rules $\R$:
If the left-hand side of a rewrite rule in $\R$ matches a subterm, we can replace this subterm 
with the right-hand side of the rewrite rule instantiated by the matching substitution.
For details, see, e.g., \cite{baader_nipkow_1998,terese2003term}.

\begin{example}[TRS for Addition]\label{ex:complexity-analysis}
    Let $\Sigma = \{\tplus\}$ and $\VSet = \{x,y\}$.
    A TRS $\R$ computing the addition of two natural numbers 
    (given in Peano notation via zero $\O$ and the successor function $\ts$) 
    is defined by the two rewrite rules
    $\tplus(\ts(x),y) \to \ts(\tplus(x,y))$ and $\tplus(\O,y) \to y$.
    $\R$ allows for the reduction $\tplus(\ts(\O),\ts(\O)) \to_{\R} \ts(\tplus(\O,\ts(\O))) \to_{\R} \ts(\ts(\O))$.
    For derivational complexity analysis, we can consider the wARS 
    $(\TSet{\Sigma}{\VSet}, \mto_{\R}, \seminatext, 
    \fNF^{\mathsf{cplx}}, \mathsf{Aggr}_{a \mto B}^{\mathsf{cplx}})$.
\end{example}

If the ARS $(A, \to)$ is finitely non-deterministic, 
then $(A, \to)$ is terminating if and only if
${\semelem{a}}_{\cplx{A, \to}} < \infty$ for every $a \in A$.
While the ``if'' direction holds for any ARS,
due to possibly infinite non-determinism, 
the ``only if'' direction does not hold in general.

\begin{example}[Non-Deterministic ARS]\label{ex:termination-analysis}
    Consider the ARS $(\IN_a, \to)$ with $\IN_a = \{a\} \cup \IN$ and 
    $\to \; = \{a \to n \mid n \in \IN\} \cup \{n+1 \to n \mid n \in \IN\}$ from
    \cite{avanzini2020probabilistic}. 
    For $\cplx{\IN_a, \to}$, we have $\semelem{a} = \infty$ as for\linebreak[3] all $n \in \IN$
    there is a $(\IN_a, \to)$-RT of depth $n+1$ with root $a$.
    However, $(\IN_a, \to)$ is terminating.
\end{example}

The definition of $\cplx{A,\to}$ can also be adjusted to prove termination
and analyze derivational complexity of \emph{sequence} ARSs.

\subsection{Size Bounds}\label{Expressability:Size}

In addition to the runtime of a program, its memory footprint is of interest as well.
Consider an operating system which should be able to run forever.
However, during this infinite execution, certain values that are stored in memory
must not become arbitrarily large, i.e., no overflow should occur. 
To analyze this, we can use the arctic semiring $\semiarc$.

\begin{example}[Memory Consumption of Operating System]\label{ex:memory-analysis}
  Consider a very simplified operating system\footnote{See \Cref{WP Comparison 2} for a
  more involved operating system algorithm that guarantees
  mutual exclusion.}
    with two processes $P_1$ and $P_2$ that should be performed repeatedly.
    The operating system can either be idle, run a process, 
    or add a process at the end of the waiting queue.
    We represent this by the ARS
    $(\mathsf{OS},\to)$ with 
    $\mathsf{OS} = \{\tidle(p), \twait(p), \trun(p) \mid p \in \{P_1, P_2\}^*\}$. 
    So an object from $\mathsf{OS}$ represents the current state of the
    operating system ($\tidle$, $\twait$, or $\trun$) and the
    current waiting queue $p$. 
    The rules of the ARS are
    $\tidle(p) \to \twait(p)$, $\tidle(p) \to \trun(p)$ 
    (add a new process to the waiting queue or run some process), 
    $\twait(p) \to \tidle(p P_1), \twait(p) \to \tidle(p P_2)$ 
    (add $P_1$ or $P_2$ to the waiting queue),
    and $\trun(P_1 p) \to \tidle(p), \trun(P_2 p) \to \tidle(p)$ 
    (run the process waiting the longest) for all $p \in \{P_1, P_2\}^*$.
    We use the wARS $(\mathsf{OS}, \to, \semiarc, 
    \fNF^{\mathsf{size}}, \mathsf{Aggr}_{a \to B}^{\mathsf{size}})$ 
    with $\fNF^{\mathsf{size}}(\trun(\varepsilon)) = 0$
    for $\NF_\to = \{ \trun(\varepsilon) \}$
    and
    \[\begin{array}{ccccccc}
        \mathsf{Aggr}_{\tidle(p) \to \twait(p)}^{\mathsf{size}} &=& \mathsf{Aggr}_{\tidle(p) \to \trun(p)}^{\mathsf{size}} &=& v_1 \\
        \mathsf{Aggr}_{\trun(P_1 p) \to \tidle(p)}^{\mathsf{size}} &=& \mathsf{Aggr}_{\trun(P_2 p) \to \tidle(p)}^{\mathsf{size}} &=& v_1 \\
        \mathsf{Aggr}_{\twait(p) \to \tidle(p P_1)}^{\mathsf{size}} &=& \mathsf{Aggr}_{\twait(p) \to \tidle(p P_2)}^{\mathsf{size}} &=& v_1 \semiplus_{\semiarc} \left( |p| + 1 \right) &=& \max\{v_1,|p| + 1\}.\!
    \end{array}\]

    \noindent
    Note that we may have a different aggregator for every
    sequence $p \in \{P_1, P_2\}^*$, i.e., $|p| + 1$ is a constant.
    We obtain $\semelem{\tidle(\varepsilon)} = \infty$, proving that a
    reduction leading to a waiting queue of unbounded size exists.
\end{example}

\subsection{Probabilistic Rewriting}\label{Expressability:Probabilistic}

In \cite{avanzini2020probabilistic,BournezRTA02,bournez2005proving}, ARSs were extended to the probabilistic setting. 
The relation $\tored{}{}{}$ of a probabi\-listic ARS
has (countable) multi-distributions on the right-hand sides.
A \emph{multi-distribution} $\mu$\linebreak[3] on a set $A \neq \emptyset$ is a countable multiset
of pairs $(p:a)$, where $p \in \RR$ with
$0 < p \leq 1$ is a probability and $a \in A$, 
with $\sum_{(p:a) \in \mu} \, p = 1$.  
$\Dist(A)$ is the set of all multi-distributions on $A$ and
$(A, \tored{}{}{})$ with 
$\tored{}{}{} \;\subseteq A \times \Dist(A)$ 
is a \emph{probabilistic abstract reduction system} (pARS). 
Depending on the property of interest, we can, e.g., use
the semiring
$\semirealext$ to describe the termination probability 
or even the expected derivational complexity of the pARS.

\begin{example}[Random Walk]\label{ex:probabilistic-analysis}
    Consider the biased random walk on $\IN$ given by the proba\-bilistic relation
    $\tored{}{}{} \; = \{n + 1 \tored{}{}{} \{\nicefrac{2}{3}:n, \nicefrac{1}{3}:n+2\} \mid n \in \IN\}$.
    We use the sARS $(\IN,\to)$ with $\to \; = \{n + 1 \to [n, n+2] \mid n \in \IN\}$,
    the semiring $\semirealext$,
    the interpretation of the normal form $\fNF(0) = 1$, 
    and the aggregator $\mathsf{Aggr}_{n+1 \to [n,n+2]}
    = \left( \nicefrac{2}{3} \semimult_{\semirealext} v_1 \right) \semiplus_{\semirealext} \left( \nicefrac{1}{3} \semimult_{\semirealext} v_2 \right) 
    = \nicefrac{2}{3} \cdot v_1 + \nicefrac{1}{3} \cdot v_2$ for every $n \in \IN$.
    The weight of the tree $\FT$ from \Cref{fig:red-tree-2} 
    is $\semantics{\FT}{}{} = \nicefrac{4}{9}$, since $\nicefrac{4}{9}$ is
    the probability to reach $0$ within two steps.
    The weight of the infinite extension $\FT_{\infty}$ of the depicted tree
    $\FT$ is $\semantics{\FT_{\infty}}{}{} = 1$,
    as such a random walk terminates with probability $1$. In this way
    one can use semiring semantics to express 
    \emph{almost-sure termination} (AST) of pARSs \cite{avanzini2020probabilistic}.

    Obviously, we can also consider infinite-support distributions, e.g.,
    consider the probabilis\-tic relation
    $\tored{}{}{} \; = \{n + 1 \tored{}{}{} \{\Geo(m):m \mid m \in \IN\} \mid n \in \IN\}$,
    where $\Geo$ denotes the \emph{geometric distribution}, i.e., $\Geo(m) = (\nicefrac{1}{2})^{m+1}$
    for all $m \in \IN$.
    Here, we use the sequence ARS $(\IN, \to)$ with $\to \; = \{n + 1 \to [0, 1, 2, \ldots] \mid n \in \IN\}$,
    and the aggregator $\mathsf{Aggr}_{n+1 \to [0, 1, 2, \ldots]} 
    = \semiplusbig_{m = 0}^{\infty} \left( \Geo(m) \semimult_{\semirealext} v_{m+1}
    \right)$ for every $n \in \IN$ ($\fNF$ and $\semirealext$ remain as
    above).

    Moreover, we can also use different aggregators and interpretations of normal forms
    to analyze the probability of reaching a certain normal form, 
    or the expected complexity (i.e., the expected number of reduction steps).
    For the expected derivational complexity of the biased random walk, 
    we again use the semiring $\semirealext$ 
    but switch to the interpretation of the normal form $\fNF(0) = 0$
    and the aggregator $\mathsf{Aggr}_{n+1 \to [n,n+2]}
    = 1 + \nicefrac{2}{3} \cdot v_1 + \nicefrac{1}{3} \cdot v_2$, 
    i.e., we add $1$ in each step and start with $0$.
    Then we obtain $\semelem{n} \neq \infty$ for every $n\in \IN$,
    i.e., the expected derivational complexity is finite for each possible start of the random
    walk, which proves \emph{positive} and \emph{strong almost-sure termination} (PAST and SAST)
    \cite{avanzini2020probabilistic,bournez2005proving}.
\end{example}

\subsection{Formal Languages}\label{Expressability:Fairness}

We can use semirings like $\semilang$ to analyze the behavior of systems.
Reconsider the setting from \Cref{ex:memory-analysis}. 
Instead of analyzing the memory consumption of the waiting queue,
we can also analyze the possible orders of running processes.

\begin{example}[Process Order for Operating System]\label{ex:fairness-analysis}
  Reconsider the 
  sARS for the operating system from \Cref{ex:memory-analysis}. 
    We can use the wARS $(\mathsf{OS}, \to, \semilang, 
    \fNF^{\mathsf{fair}}, \mathsf{Aggr}_{a \to B}^{\mathsf{fair}})$ with $\Sigma = \{P_1, P_2\}$, 
    $\fNF^{\mathsf{fair}}(\trun(\varepsilon)) = \semione_{\semilang} = \{\varepsilon\}$ and
  \[\begin{array}{ccccc}
        \mathsf{Aggr}_{\tidle(p) \to \twait(p)}^{\mathsf{fair}} &=& \mathsf{Aggr}_{\tidle(p) \to \trun(p)}^{\mathsf{fair}} &=& v_1 \\
        \mathsf{Aggr}_{\twait(p) \to \tidle(p P_1)}^{\mathsf{fair}} &=& \mathsf{Aggr}_{\twait(p) \to \tidle(p P_2)}^{\mathsf{fair}} &=& v_1 \\
        &&\mathsf{Aggr}_{\trun(P_1 p) \to \tidle(p)}^{\mathsf{fair}} &=& \{P_1\}
        \semimult_{\semilang} v_1 \\
        &&\mathsf{Aggr}_{\trun(P_2 p) \to \tidle(p)}^{\mathsf{fair}} &=& \{P_2\} \semimult_{\semilang} v_1 \\
    \end{array}\]

    \noindent
    Our operating system allows running the processes in any order,
    since $\semelem{\tidle(\varepsilon)} = \Sigma^*$.
\end{example}

\subsection{Combinations of Semirings}\label{Expressability:WP}

Cartesian products (and even matrices) of semirings form a semiring again 
by performing addition and multiplication pointwise.
Moreover, if all the semirings are complete lattices, then so is the 
resulting Cartesian product semiring.

\begin{restatable}[Cartesian Product Semiring]{lemma}{TupleSem}\label{lem:tuple-semiring}
    Let $(\IS_i)_{1 \leq i \leq n}$ be a family of complete lattice semirings. 
    Then $\IS = \bigtimes_{i = 1}^{n} \IS_i$ is a complete lattice semiring with
    $(x_1, \ldots, x_n) \semiplus_{\IS} (y_1, \ldots, y_n)
    = (x_1 \semiplus_{\IS_1} y_1, \ldots, x_n \semiplus_{\IS_n} y_n)$,
    and $(x_1, \ldots, x_n) \semimult_{\IS} (y_1, \ldots, y_n)
    = (x_1 \semimult_{\IS_1} y_1, \ldots, x_n \semimult_{\IS_n} y_n)$.
\end{restatable}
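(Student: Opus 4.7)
The plan is to verify the three requirements for $\IS$ to be a complete lattice semiring in turn: (i) the semiring axioms, (ii) that $\IS$ is naturally ordered, and (iii) that $\semisup T$ exists for every $T \subseteq \IS$.

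First, I would check the semiring axioms componentwise. Let $\bzero_\IS = (\seminull_{\IS_1}, \ldots, \seminull_{\IS_n})$ and $\bone_\IS = (\semione_{\IS_1}, \ldots, \semione_{\IS_n})$. Since each $(\IS_i, \semiplus_{\IS_i}, \seminull_{\IS_i})$ is a commutative monoid, it is immediate that $(\IS, \semiplus_\IS, \bzero_\IS)$ is a commutative monoid by evaluating each coordinate separately; similarly $(\IS, \semimult_\IS, \bone_\IS)$ is a monoid. Distributivity, i.e.\ $\bx \semimult_\IS (\by \semiplus_\IS \bz) = (\bx \semimult_\IS \by) \semiplus_\IS (\bx \semimult_\IS \bz)$, and the annihilator property $\bzero_\IS \semimult_\IS \bx = \bx \semimult_\IS \bzero_\IS = \bzero_\IS$, both reduce to the corresponding identities in each $\IS_i$. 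None of these steps is difficult, since the operations are defined pointwise.

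Second, I would show that $\IS$ is naturally ordered by establishing that its natural order coincides with the pointwise natural order: $\bx \semileq_\IS \by$ iff $x_i \semileq_{\IS_i} y_i$ for all $i$. The ``if'' direction is immediate (combine the witnesses coordinatewise), and the ``only if'' direction follows by projecting the witness $\bu$ with $\bx \semiplus_\IS \bu = \by$ onto each coordinate. Antisymmetry of $\semileq_\IS$ is then inherited from the antisymmetry of each $\semileq_{\IS_i}$: if $\bx \semileq_\IS \by$ and $\by \semileq_\IS \bx$, then $x_i \semileq_{\IS_i} y_i$ and $y_i \semileq_{\IS_i} x_i$ for every $i$, so $x_i = y_i$ for every $i$, hence $\bx = \by$. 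Reflexivity and transitivity are free by the footnote in \Cref{Preliminaries}.

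Third, for the complete lattice property, given any $T \subseteq \IS$, I would define the candidate supremum $\bs = (\semisup \pi_1(T), \ldots, \semisup \pi_n(T))$, where $\pi_i(T) = \{x_i \mid (x_1, \ldots, x_n) \in T\} \subseteq \IS_i$. These coordinatewise suprema exist because each $\IS_i$ is a complete lattice. Using the characterization of $\semileq_\IS$ from the previous step, $\bs$ is an upper bound of $T$ because each coordinate $\semisup \pi_i(T)$ is an upper bound of $\pi_i(T)$; and if $\by$ is any other upper bound of $T$, then each $y_i$ is an upper bound of $\pi_i(T)$, hence $\semisup \pi_i(T) \semileq_{\IS_i} y_i$ for all $i$, which by the characterization gives $\bs \semileq_\IS \by$. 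Thus $\bs = \semisup T$.

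The main obstacle, though still light, is step two: one must be careful that the natural order on the product really is the pointwise one (rather than some coarser relation that only becomes a partial order by accident), because completeness in step three is proved relative to this order. Once that equivalence is pinned down, both antisymmetry and the construction of suprema fall out of the corresponding facts in each $\IS_i$.
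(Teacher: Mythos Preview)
Your proposal is correct and follows essentially the same componentwise approach as the paper: verify the semiring axioms pointwise, show the natural order on $\IS$ is the pointwise order and hence antisymmetric, and construct suprema coordinatewise. Your explicit two-direction argument that $\semileq_\IS$ coincides with the pointwise order, and your direct verification that the coordinatewise supremum is $\semileq_\IS$ every upper bound, are slightly more careful than the paper's phrasing, but the overall strategy is identical.
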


\begin{example}[Analyzing Complexity and Safety Simultaneously]\label{ex:tuple-semi}
    Consider the ARS
    $(\IZ, \to)$ with $\to \;= \{n \to n-2 \mid n~\text{odd}\} 
    \cup \{n \to n+2 \mid n \leq -2, n~\text{even}\} 
    \cup \{n \to n-2 \mid n \geq 2, n~\text{even}\}$.
      Additionally, consider a certain ``unsafe'' property, e.g., hitting an even number.
    To analyze whether all infinite sequences are safe, 
    we take the product semiring
    $\IS = \seminatext \times \semibool$ 
    over
    $\seminatext$ and the \emph{Boolean semiring} $\semibool$.  
    We use the normal form interpretation $\fNF(0) = (0,\ttrue)$ and 
    the aggregator $\mathsf{Aggr}_{n \mto [m]} = (1,[n \! \mod 2 = 0]) \semiplus_\IS v_1$ 
    for every $n \to m$.
    The first component describes the derivational complexity, 
    while the second describes 
    whether we reached an even number at some point
    during the reduction.
    In \Cref{Proving Upper Bounds}, we will see how to prove boundedness (i.e.,
    $\semelem{n} \neq (\infty, \ttrue)$ for every $n \in \IZ$)
    indicating safety of every infinite reduction.
\end{example}

Taking tuples for verification is not the same as performing two separate analyses.
Analy\-zing safety and complexity on their own for the ARS from 
\Cref{ex:tuple-semi} would fail, 
since the ARS is neither safe nor has finite complexity for every $n \in \IZ$.
Note the change of quantifiers: Instead of
``\emph{all runs are safe, or all runs are finite}'', 
we prove ``\emph{all runs are finite or safe}''.

\subsection{Limitations}\label{Expressability:Limit}

The following example illustrates a limit 
of our approach.

\begin{example}[Starvation Freedom]\label{ex:starvation-freedom}
   To analyze \emph{starvation freedom}, i.e., whether every process will eventually be served,
    one can use the tuple semiring $\seminatext\!\times\!\seminatext$ 
    for our operating system from \Cref{ex:memory-analysis} (a corresponding more
    complex example for starvation freedom is presented in \Cref{WP Comparison 2}).
  Now the two entries of the tuples count how often a process was already served.
    Thus,  we can use the wARS $(\mathsf{OS}, \to, \seminatext\!\times\!\seminatext, 
    \fNF^{\mathsf{starv}}, \mathsf{Aggr}_{a \to B}^{\mathsf{starv}})$ with
    $\fNF^{\mathsf{starv}}(\trun(\varepsilon)) =
    \seminull_{(\seminatext\!\times\!\seminatext)} =
 (0,0)$ and the aggregator
    \[\begin{array}{ccccc}
        \mathsf{Aggr}_{\tidle(p) \to \twait(p)}^{\mathsf{starv}} &=& \mathsf{Aggr}_{\tidle(p) \to \trun(p)}^{\mathsf{starv}} &=& v_1 \\
        \mathsf{Aggr}_{\twait(p) \to \tidle(p P_1)}^{\mathsf{starv}} &=& \mathsf{Aggr}_{\twait(p) \to \tidle(p P_2)}^{\mathsf{starv}} &=& v_1 \\
        && \mathsf{Aggr}_{\trun(P_1 p) \to \tidle(p)}^{\mathsf{starv}}  &=& (1,0) \semiplus v_1 \\
        && \mathsf{Aggr}_{\trun(P_2 p) \to \tidle(p)}^{\mathsf{starv}} &=& (0,1) \semiplus v_1 \!
    \end{array}\]
However, starvation freedom cannot be analyzed via our current definition of $\semelem{a}$
in \Cref{def:element-semiring-semantics}.
We have
$\semelem{a} = (\infty, \infty)$ for all 
     $a \in \mathsf{OS} \setminus
\{\trun(\varepsilon)\}$ (i.e., for all non-normal forms).  This means that
for every such start
configuration $a$, there \emph{exists} a (``worst-case'') reduction of weight  
$(\infty, \infty)$ where both processes are served infinitely often.
However, for starvation freedom, one\linebreak[3] would have to show that \emph{every} infinite reduction
serves both processes infinitely often (i.e., this would need to hold
irrespective of 
 how  the non-determinism in the reductions is resolved).
However, $(\mathsf{OS}, \to)$ is not starvation free, 
    since, e.g.,  we may only serve $P_1$ infinitely often.

    So a property like starvation freedom cannot be expressed with our current definition
of $\semelem{a}$, because due to the use of the least upper bound in \Cref{def:element-semiring-semantics}, here we
only focus on worst-case reductions. An extension of our approach to also analyze (bounds
on) best-case reductions in order to prove properties like starvation freedom is  an interesting direction for future work.
\end{example}


\section{Proving  Upper Bounds on Weights}\label{Proving Upper Bounds}

In this section, we present a technique amenable to automation which aims to prove an
upper bound on all possible weights
${\semelem{a}}$ of objects $a \in A$, i.e., it shows that
${\semelem{a}} \neq \top$ for all  $a \in A$.
For the remaining sections, we fix a wARS $\wARS$.

\begin{definition}[Boundedness]
  A wARS is \defemph{bounded} if
  ${\semelem{a}} \neq \top$ for all  $a \in A$.
\end{definition}

The examples in \Cref{Expressability} illustrate that boundedness is a crucial
property for wARSs, and that depending on the semiring, on the interpretation of
the normal forms, and on the aggregators, boundedness may have completely
different implications.

We first establish sufficient conditions for boundedness of a wARS in \Cref
{Guaranteed Boundedness}. Afterwards, we show in \Cref{Interpretation Method}
that the well-known interpretation method can be generalized to prove
boundedness for wARSs where these conditions are not satisfied.

\subsection{Guaranteed Boundedness}\label{Guaranteed Boundedness}

One can directly guarantee boundedness by an adequate choice of the semiring,
the inter\-pretation of the normal forms, and the aggregators. We say
that $\fNF: \NFto \to \IS$ is \emph{universally bounded} if
there exists a universal bound $\bound \in \IS\setminus \{\top\}$ with $\fNF
(a) \semileq
\bound$ for all $a\in \NFto$.
An aggregator function $\aggrrule : \IS^{|B|} \to \IS$ 
is \emph{selective} if for every $[s_1,s_2, \ldots] \in \IS^{|B|}$ there exists an
$1 \leq i \leq |B|$ such that $\aggrrule [s_1,s_2, \ldots] = s_i$.
For example, 
in the \emph{bottleneck semiring} $\semibottle = \semibottlelong$, finite
aggregator functions without constants are always
selective, since $\max$ and $\min$ are selective functions.

\begin{restatable}[Sufficient Condition for Boundedness (1)]{theorem}{SuffBoundOne}\label{thm:guaranteed-bounded1}
    A wARS is
    \begin{itemize}
        \item not bounded if $\fNF(a) = \top$ for some
        $a\in \NFto$.

        \item bounded if $\fNF$ is universally bounded and all
        $\aggrrule$ are selective.
    \end{itemize}
\end{restatable}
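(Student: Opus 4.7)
My approach is to treat the two statements independently. For the first (unboundedness) claim, the witness is immediate: if $\fNF(a) = \top$ for some $a \in \NFto$, then the single-node $(A,\to)$-RT $\FT_a$ consisting only of the root labeled by $a$ belongs to $\Phi(a)$, and by \Cref{def:semiringSemantics} its weight is $\semantics{\FT_a}{}{} = \fNF(a) = \top$. Hence $\semelem{a} \semigeq \top$, which together with maximality of $\top$ and antisymmetry of the natural order forces $\semelem{a} = \top$, so the wARS is not bounded.

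For the second (boundedness) claim, I would fix a witness $C \in \IS \setminus \{\top\}$ with $\fNF(a) \semileq C$ for all $a \in \NFto$, as provided by universal boundedness. The heart of the proof is the following lemma: for every finite-depth $(A,\to)$-RT $\FT$ and every node $v$ of $\FT$, we have $\semantics{\FT}{v}{} \semileq C$. I would prove this by bottom-up induction on the distance from $v$ to the deepest leaf below it, strengthening the claim to: $\semantics{\FT}{v}{} = \semantics{\FT}{\ell}{}$ for some leaf $\ell$ of $\FT$ reachable from $v$. In the base case, $v$ is itself a leaf, and $\semantics{\FT}{v}{}$ is either $\fNF(a_v) \semileq C$ (if $a_v \in \NFto$, by universal boundedness) or $\seminull \semileq C$ (if $a_v \notin \NFto$, by minimality of $\seminull$, see footnote~\ref{ReflexiveTransitive}). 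In the inductive step, selectivity of $\aggr{a_v \to B}$ applied to $[\semantics{\FT}{w}{} \mid w \in vE]$ yields that the aggregated value equals $\semantics{\FT}{w_i}{}$ for some child $w_i \in vE$, and the inductive hypothesis closes the argument.

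Given the lemma, the conclusion follows by taking suprema: every finite-depth tree rooted at an arbitrary $a \in A$ has weight $\semileq C$, so $C$ is an upper bound of $\{\semantics{\FT}{}{} \mid \FT \in \Phi(a)\}$, whence $\semelem{a} = \semisup \{\semantics{\FT}{}{} \mid \FT \in \Phi(a)\} \semileq C$. Since $C \neq \top$, antisymmetry combined with $C \semileq \top$ rules out $\semelem{a} = \top$, so the wARS is bounded. I do not expect serious obstacles; the only point that deserves care is that an aggregator $\aggr{a_v \to B}$ may have infinite arity when $B$ is an infinite sequence, but selectivity is stated uniformly for all arities, and the induction is on depth below $v$ rather than on the branching, so the argument goes through unchanged.
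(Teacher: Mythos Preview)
Your proposal is correct and follows essentially the same approach as the paper: the single-node tree witnesses unboundedness in the first item, and for the second item both you and the paper perform a bottom-up induction on the height of a node in a finite-depth reduction tree, using selectivity at each inner node to inherit the bound $C$ from a child. Your explicit strengthening to ``the weight at $v$ equals the weight at some leaf below $v$'' is a harmless refinement of the paper's bare inequality $\semantics{\FT}{v}{} \semileq C$, and your remark about infinite-arity aggregators not affecting the induction is correct and matches the paper's (implicit) treatment.
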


\medskip

Next, we do not only consider properties of $\fNF$ and
$\aggrrule$, but also properties of the sARS in order to
guarantee boundedness.

\begin{example}[Boundedness for Provenance Analysis Example]\label{ex:guaranteed-boundedness}
    Reconsider the setting of \Cref{ex:boolean-formula} and the sARS
    of \Cref{ex:boolean-formula-as-mARS}. Note that
    all propositional formulas are finite, hence the sARS is
    finitely branching and terminating. If none of the atomic facts has infinite cost,
    then no formula has infinite cost, since finite sums and
    products in the arctic semiring $\semiarc = \semiarclong$ cannot result in
    $\infty$ if all of its arguments are smaller than $\infty$.
\end{example}

The latter property of the semiring 
is called the \emph{extremal property} 
(or \emph{convex hull concept}).

\begin{definition}[Extremal Property]\label{def:extremal}
    A function $f:\IS^n\to \IS$ over a semiring $\IS$ with $n\in \IN$
    has the \defemph{extremal property} if
    $f(e_1, \ldots, e_n) \neq \top$ for all $e_1, \ldots, e_n\in \IS
    \setminus\{\top\}$.
    A semiring $\IS = \semilong$ has the extremal property if 
    $\semiplus$ and $\semimult$ have the extremal property.
\end{definition}

If addition and multiplication of a semiring $\IS$ satisfy the extremal
property, then sums and products 
of finite sequences $T \subseteq \IS \setminus \{\top\}$ do not
evaluate to $\top$, i.e., $ \semiplusbig T \neq \top \text{ and } \semimultbig
T \neq \top$.
Thus, every finite aggregator function that does
not use the constant $\top$ never evaluates to $\top$.
However, this does not necessarily hold for infinite
sums, products, and aggregators.
Consider, e.g., the subset $\IN \subset \IN^{\infty}$ of the extended natural
numbers, where $\semiplusbig_{\IN^{\infty}} \IN = \sum \IN = \infty = \top_{\IN^{\infty}}$.
Selective functions always satisfy the extremal property. 

\begin{example}[Extremal Property]
    \Cref{ex:guaranteed-boundedness} shows that the arctic semiring $\semiarc$
    has the extremal property. The extended naturals semiring
    $\seminatext$ also has the extremal property, since $a +
    b \neq \infty$ and $a \cdot b \neq \infty$ for all $a, b\in\IN$. Actually,
    the extremal property holds for all semirings in \cref{table:semirings} except
    for the
    formal languages semiring $\semilang$, where, e.g., $\big
    (\Sigma^* \setminus \{\varepsilon\}\big) 
    \cup \{\varepsilon\} = \Sigma^* = \semitop_{\semilang}$. 
    Cartesian products of semirings with the extremal property do not
    necessarily satisfy the extremal property again:
    Consider $\semiarc \times \semiarc$,
    where the addition of two objects that are different from $\semitop_{\semiarc \times \semiarc}$
    yields $(\seminull,\semitop) \semiplus
    (\semitop, \seminull) = (\semitop, \semitop) = \semitop$.
\end{example}

This yields another sufficient condition for boundedness
(see \Cref{ex:guaranteed-boundedness}).

\begin{restatable}[Sufficient Condition for Boundedness (2)]{theorem}{SuffBoundTwo}\label{thm:guaranteed-bounded}
    A wARS is bounded if
   \begin{itemize}
        \item the sARS $(A, \to)$ is terminating, finitely
        non-deterministic, and finitely branching,

      \item the semiring $\IS$ has the extremal property,

        \item $\fNF(a) \neq \top$ for all $a \in \NFto$, and

        \item all aggregators $\aggrrule$ are finite and do not use $\top$ as a constant.
    \end{itemize}
\end{restatable}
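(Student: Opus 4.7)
The plan is to show $\semelem{a} \neq \top$ for every $a \in A$ by arguing that (i)~the set $\Phi(a)$ is finite, (ii)~every individual weight $\semantics{\FT}{}{}$ with $\FT \in \Phi(a)$ is strictly below $\top$, and (iii)~the supremum of a finite collection of non-$\top$ values stays below $\top$ thanks to the extremal property of $\semiplus$.

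First, I would establish finiteness of $\Phi(a)$. Since $(A,\to)$ is finitely non-deterministic and finitely branching, at each node of a reduction tree only finitely many expansions are possible (finitely many $B$ with $a_v\to B$, and each $B$ finite). Termination rules out reduction trees of infinite depth, so a K\"onig-style argument on the ``choice tree'' of all possible expansion strategies yields only finitely many \emph{maximal} reduction trees rooted at $a$, each of which is then a finite tree. Every $\FT \in \Phi(a)$ is a prefix of some maximal tree, and each finite tree has finitely many prefixes, so $\Phi(a)$ is finite.

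Second, by structural induction on the finite tree $\FT$ I would show $\semantics{\FT}{v}{} \neq \top$ for every node $v$. At a leaf labeled by a normal form the weight is $\fNF(a_v) \neq \top$ by assumption; at a non-normal-form leaf it is $\seminull \neq \top$ (the trivial semiring case being vacuous). At an inner node $v$, the child weights are $\neq \top$ by the induction hypothesis, and $\aggr{a_v \to B}$ is a \emph{finite} aggregator whose constants are different from $\top$. A nested induction on the syntactic structure of the aggregator, applying the extremal property of $\semiplus$ and $\semimult$ at every finite $\semiplusbig$ and $\semimultbig$, shows that the induced function evaluates to a value $\neq \top$.

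Finally, enumerate $\Phi(a) = \{\FT_1, \ldots, \FT_k\}$ using step (i) and put $w_i = \semantics{\FT_i}{}{}$. Since $w_1 \semiplus \cdots \semiplus w_k$ is an upper bound of $\{w_1,\ldots,w_k\}$ in the natural order, $\semelem{a} \semileq w_1 \semiplus \cdots \semiplus w_k$; if $\semelem{a}$ were $\top$, antisymmetry of the natural order would force the finite sum to equal $\top$, contradicting the extremal property iterated $k{-}1$ times. Hence $\semelem{a} \neq \top$. The main obstacle I expect is step (i): one has to set up the K\"onig argument carefully and verify that an infinite branch in the ``choice tree'' would correspond to an infinite reduction sequence, which is excluded by termination. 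The remaining steps are routine but rely essentially on finiteness at every level \dash finite aggregators, finite branching, and finite non-determinism \dash since the extremal property is preserved only under finite, not infinite, sums and products.
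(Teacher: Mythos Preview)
Your proposal is correct and follows essentially the same strategy as the paper: both arguments first establish that $\Phi(a)$ is finite from termination, finite non-determinism, and finite branching, then show each $\semantics{\FT}{}{} \neq \top$ by induction over the tree using the extremal property of finite aggregators, and finally bound the finite supremum $\semelem{a}$ by a finite $\semiplus$-sum, which is $\neq \top$ by the extremal property of $\semiplus$. The paper differs only cosmetically, organizing the induction by node height and packaging the bounds into constants $C_{a,n}$, and it is similarly terse about the K\"onig step you flag as the main obstacle.
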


\medskip

While the requirements in \Cref{thm:guaranteed-bounded} may seem restrictive,
the ones on $\fNF$ and $\aggrrule$ only consider certain edge cases.
However, termination of the underlying sARS is a crucial requirement
for \Cref{thm:guaranteed-bounded}
that one has to prove beforehand.

\cref{ex:termination-analysis} presents an unbounded wARS
which satisfies many of the
constraints from \cref{thm:guaranteed-bounded}, but illustrates the importance
of \emph{finite} non-determinism. Infinite non-determinism allows the existence
of a chain of reduction trees with ascending weights which reaches $\top$
in the limit.

\subsection{Proving Boundedness via Interpretations}\label{Interpretation Method}

To handle wARSs that neither satisfy the requirements of \Cref{thm:guaranteed-bounded1}
nor of
\Cref{thm:guaranteed-bounded}, 
we now extend the well-known interpretation method (see, e.g., \cite{lankford1979ProvingTermRewriting}) to prove
boundedness of general wARSs. 
In some cases, e.g., when considering term rewriting as in \Cref{Expressability:Runtime},
this often allows proving termination automatically.
Before presenting the technique to prove boundedness via interpretations, we
introduce the notions of monotonicity and continuity. 

\begin{definition}[Monotonicity, Continuity]\label{def:monotonic}
    A function $f: \IS \rightarrow \IS$ on a semiring $\IS$ is 
    \defemph{monotonic} if for all $s,t \in \IS$,
    $s \semileq t$ implies $f(s) \semileq f(t)$.
    It is \defemph{continuous} if for all $T\subseteq \IS$,
    $\semisup f(T) = \semisup \{f(t) \mid t \in T\} = f(\semisup T)$.
    A function $f: \IS^n \rightarrow \IS$ with $n \geq 2$ is monotonic (continuous) if it is
    monotonic (continuous) in every argument.
\end{definition}

The natural order implies monotonicity of $\semiplus$ and $\semimult$,
and thus, every aggregator function is monotonic as well.
An analogous result is obtained if additionally
$\semiplus$ and $\semimult$ are continuous.

\begin{restatable}[Monotonicity and Continuity of Aggregator Functions]{lemma}{MonoContAggregator}\label{lem:mono-con-agg}
    For a semiring, the operations $\semiplus$, $\semimult$,
    and all aggregator functions are monotonic.
    Moreover, if $\semiplus$ and $\semimult$ are continuous, then so are all aggregator functions.
\end{restatable}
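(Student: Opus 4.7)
The plan is to establish both monotonicity and continuity by structural induction on the aggregator, after first handling the binary operations and then lifting the result to (possibly infinite) sums and products over sequences.

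First I would prove monotonicity of $\semiplus$ and $\semimult$ directly from the definition of the natural order. If $s_1 \semileq s_2$, so $s_1 \semiplus u = s_2$ for some $u \in \IS$, then for any $r \in \IS$ we have $(s_1 \semiplus r) \semiplus u = s_1 \semiplus u \semiplus r = s_2 \semiplus r$ by associativity and commutativity, giving $s_1 \semiplus r \semileq s_2 \semiplus r$; monotonicity of $\semiplus$ in the other argument follows from commutativity. For $\semimult$, right distributivity yields $(s_1 \semimult r) \semiplus (u \semimult r) = (s_1 \semiplus u) \semimult r = s_2 \semimult r$, hence $s_1 \semimult r \semileq s_2 \semimult r$, and left distributivity handles monotonicity in the first argument of $\semimult$ analogously.

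Second, I would lift this to the sequence operations $\semiplusbig$ and $\semimultbig$. If $T = (t_i)_i$ and $T' = (t'_i)_i$ have the same length with $t_i \semileq t'_i$ for every $i$, then by iterated monotonicity of the binary operation each finite prefix sum $\semiplusbig \semifin$ of $T$ is bounded above by the corresponding prefix sum of $T'$, which is itself below $\semiplusbig T'$ (by \Cref{def:infinite_operations}). Taking the supremum over all finite prefixes of $T$ yields $\semiplusbig T \semileq \semiplusbig T'$; the argument for $\semimultbig$ is identical. With this, monotonicity of aggregator functions follows by structural induction on aggregators: constants and variables are trivially monotonic, and for $\semiplusbig F$ or $\semimultbig F$ with $F = [f_1, f_2, \ldots]$, monotonicity of each $f_i$ (induction hypothesis) combined with monotonicity of the sequence operation yields monotonicity of the induced function in each of its arguments.

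For continuity, assuming $\semiplus$ and $\semimult$ are continuous, I would follow exactly the same structural pattern. Continuity of the binary operations in each argument, combined with the definition of $\semiplusbig T$ as a supremum of finite prefix sums, gives continuity of $\semiplusbig$ (and, mutatis mutandis, $\semimultbig$) in each component of the sequence. The structural induction then concludes: for $\semiplusbig F$ with every $f_i$ continuous by the induction hypothesis, the induced function is the composition of a continuous sequence operation with continuous component functions, hence continuous in each variable; the same holds for $\semimultbig F$.

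The main obstacle is the continuity step for the infinite sequence operations, where one has to interchange two suprema of the form $\semisup_n \semiplusbig_{i=1}^{n} \semisup_{x \in X} g_i(x) = \semisup_{x \in X} \semisup_n \semiplusbig_{i=1}^{n} g_i(x)$. Here I would use the standard fact that suprema commute in a complete lattice, together with the previously established monotonicity of $\semiplusbig$ (and of each $g_i$) to collapse the iterated supremum over $X^n$ (which arises from iterating binary continuity in each argument) onto its diagonal indexed by $X$. Once monotonicity is in place, this interchange is routine, and the rest of the proof is just bookkeeping along the induction.
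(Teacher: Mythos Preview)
Your proposal is correct and follows essentially the same approach as the paper: prove monotonicity of $\semiplus$ and $\semimult$ from the natural order and distributivity, lift to infinite sums and products via the supremum-of-prefixes definition, and then conclude for aggregators by composition/structural induction; likewise for continuity. One small remark: since continuity in \Cref{def:monotonic} is defined \emph{per argument}, you never actually face a supremum over $X^n$ that needs to be collapsed onto a diagonal---only one coordinate varies, so after applying finite continuity at position $j$ you just need the standard interchange $\semisup_n \semisup_{k\in K} = \semisup_{k\in K}\semisup_n$, exactly as the paper does; the diagonal/monotonicity trick you sketch is harmless but unnecessary here.
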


Now we show how to use interpretations to prove boundedness of a wARS.
The idea is to\linebreak[3] use an ``embedding'' (or ``ranking function'')
$\mathfrak{e}$ which maps every object from $A$ to a non-maximal element of the semiring
$\IS$. Due to monotonicity of all aggregator functions,
the conditions of \Cref{thm:boundedness}
ensure that for all nodes $v$ in any
finite-depth reduction tree $\FT$,
we have $\mathfrak{e}(a_v) \semigeq \semantics{\FT}{v}{}$. 
Hence, $\mathfrak{e}(a)$ is a bound on $\semantics{\FT}{}{}$ 
for all reduction trees $\FT \in \Phi(a)$.

\begin{restatable}[Sufficient and Necessary Condition for Boundedness]{theorem}{BoundSoundAndComplete}\label{thm:boundedness}
    A wARS is bounded if
    there exists an \defemph{embedding}
    $\mathfrak{e}: A\to \IS\setminus\{\top\}$ such that 
    \begin{itemize}
        \item $\mathfrak{e}(a) \semigeq \fNF(a)$ for all
        $a \in \NFto$ and

        \item $\mathfrak{e}(a) \semigeq
            \aggrrule [ \mathfrak{e}(b) \mid b\in B ]
        $ for all $\rulesemi$.
    \end{itemize}
    Then $\mathfrak{e}(a) \semigeq \semelem{a}$ for all $a
    \in A$.
    The reverse (``only if'')
    holds if 
    $\semiplus$ and $\semimult$ are continuous.
\end{restatable}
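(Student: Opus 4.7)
The theorem splits into two directions. My plan for the soundness direction (``if'') is to fix any $\FT \in \Phi(a)$ of finite depth and prove by induction on this depth that $\mathfrak{e}(a_v) \semigeq \semantics{\FT}{v}{}$ holds at every node $v$ of $\FT$. The leaf case splits: if $a_v \in \NFto$, the first assumption on $\mathfrak{e}$ directly gives $\mathfrak{e}(a_v) \semigeq \fNF(a_v) = \semantics{\FT}{v}{}$; otherwise $\semantics{\FT}{v}{} = \seminull$, which is the minimum of $\IS$, so the inequality is automatic. For an inner node $v$ with $a_v \to B$ and successors labeled $B = [a_w \mid w \in vE]$, I combine the inductive hypothesis with monotonicity of the aggregator function (\Cref{lem:mono-con-agg}) and the second assumption to obtain
\[
\semantics{\FT}{v}{} \;=\; \aggr{a_v \to B}[\semantics{\FT}{w}{} \mid w \in vE] \;\semileq\; \aggr{a_v \to B}[\mathfrak{e}(a_w) \mid w \in vE] \;\semileq\; \mathfrak{e}(a_v).
\]
Taking the supremum at the root over $\FT \in \Phi(a)$ then yields $\mathfrak{e}(a) \semigeq \semelem{a}$, and since $\mathfrak{e}(a) \neq \top$ antisymmetry of the natural order rules out $\semelem{a} = \top$, proving boundedness.

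For the completeness direction (``only if''), my candidate embedding is $\mathfrak{e}(a) := \semelem{a}$, which takes values in $\IS \setminus \{\top\}$ exactly because the wARS is assumed bounded. The normal-form condition holds because for $a \in \NFto$ the single-node tree rooted at $a$ lies in $\Phi(a)$ and has weight $\fNF(a)$, so $\semelem{a} \semigeq \fNF(a)$. For the reduction condition, fix any rule $a \to B$. Given any choice of trees $\FT_b \in \Phi(b)$ for each $b \in B$, gluing them beneath a fresh root labeled $a$ (and applying the reduction $a \to B$) produces a tree in $\Phi(a)$ whose weight equals $\aggrrule[\semantics{\FT_b}{}{} \mid b \in B]$; hence $\semelem{a}$ dominates all such combined weights. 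Pulling the suprema defining each $\semelem{b}$ out of the aggregator then gives
\[
\aggrrule[\semelem{b} \mid b \in B] \;=\; \semisup \left\{ \aggrrule[\semantics{\FT_b}{}{} \mid b \in B] : \FT_b \in \Phi(b) \text{ for each } b \in B \right\} \;\semileq\; \semelem{a}.
\]

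I expect the main obstacle to lie in the last displayed commutation when $B$ is infinite: \Cref{lem:mono-con-agg} only asserts continuity in each argument separately, whereas here all argument-wise suprema must be pulled out simultaneously. I plan to handle this by structural induction on $\aggrrule \in \AggSet$, in which constants and variables are trivial, finite sums and products reduce to iterated single-argument continuity of $\semiplus$ and $\semimult$, and infinite sums and products are themselves suprema over finite prefixes by \Cref{def:infinite_operations}, so the required swap of $\semisup$ with $\aggrrule$ ultimately rests on commutation of $\semisup$ with $\semiplus$ and $\semimult$.
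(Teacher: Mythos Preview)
Your proof matches the paper's almost exactly: the ``if'' direction is the same height induction using monotonicity of aggregators, and for ``only if'' the paper also sets $\mathfrak{e}(a)=\semelem{a}$ and pulls the supremum through $\aggrrule$ by continuity (the paper simply writes ``since $\aggrrule$ is continuous'' at the step you flagged as the main obstacle).

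One genuine imprecision: your gluing claim ``produces a tree in $\Phi(a)$'' fails when $B$ is infinite and the depths of the chosen $\FT_b$ are unbounded, since the glued tree then has infinite depth and lies outside $\Phi(a)$. The paper sidesteps this by taking the supremum over $\overline{\Phi(a)}$ (trees in $\Phi(a)$ whose first step is $a\to B$), so the depth bound is built in, and then observing that for each \emph{individual} coordinate $b$ the projection still ranges over all of $\Phi(b)$. Your structural-induction plan actually repairs this as well: once you reduce an infinite aggregator to the supremum of its finite prefixes via \Cref{def:infinite_operations}, each prefix involves only finitely many variables, so only finitely many $\FT_b$ need be nontrivial and the glued tree has finite depth. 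So the gap is real but already covered by the fix you propose for the continuity issue; just make the dependence explicit rather than asserting the gluing step outright.
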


As shown in \Cref{Expressability:Probabilistic}, for every probabilistic ARS,
we can obtain a corresponding wARS to analyze PAST/SAST. Hence, \Cref{thm:boundedness}
allows proving boundedness of this wARS which then implies PAST/SAST of the
original probabilistic ARS.

\begin{example}[Expected Runtime of Probabilistic ARSs]\label{PAST example}
    We use \cref{thm:boundedness} to prove that the expected
    derivational complexity of the biased random walk in \cref{ex:probabilistic-analysis}
    is finite.
    We use the embedding $\mathfrak{e}(n) = 3 \semimult n = 3\cdot n$ for all $n \in \IN$. 
    Note that we indeed have $\mathfrak{e}(n) \neq \top = \infty$ for all $n \in A = \IN$.
    Moreover, for $0$ (the only normal form) we have
    $\mathfrak{e}(0) = 0 = \fNF(0)$.
    Regarding the reduction steps, we have for any $n+1 \to [ n, n+2 ]$ with $n \in \IN$:
    \vspace*{-.15cm}
    \[\begin{array}{crcl}
        &\mathfrak{e}(n+1) &\semigeq& \aggr{n+1 \to [n, n+2]}[\mathfrak{e}(n), \mathfrak{e}(n+2)] \\
        \iff&3 \semimult (n+1) &\semigeq& \aggr{n+1 \to [n, n+2]}[3\semimult n, 3 \semimult (n+2)] \\
        \iff&3 + 3\cdot n  &\semigeq& 1 \semiplus \nicefrac{2}{3} \semimult 3\semimult n
        \semiplus \nicefrac{1}{3} \semimult 3 \semimult (n+2) \;=\; 3 + 3\cdot n \!
    \end{array}\]
    \vspace*{-.3cm}

    \noindent
    By \cref{thm:boundedness}, the wARS is bounded, which means
    that the expected derivational complexity of the biased random walk is finite for every
    starting position $n\in\IN$. The embedding $\mathfrak{e}$ gives us a bound on the
    expected complexity as well, i.e., by \cref{thm:boundedness} we infer that the expected number of
    steps is at most \emph{three} times the start position $n$.
\end{example}

\begin{example}[Termination of TRSs]\label{Termination Example}
    The next example shows how our approach can be used for automated termination proofs
    of term rewrite systems. 
    Reconsider the TRS $\R$ from \cref{ex:complexity-analysis}
    with the wARS $(\TSet
    {\Sigma}{\VSet}, \mto_{\R}, \seminatext,
    \mathsf{f}_{\NF}^{\mathsf{cplx}}, \mathsf{Aggr}_{a \mto B}^{\mathsf
    {cplx}})$.
    We define the embedding $\mathfrak{e}: \TSet{\Sigma}{\VSet} \to \IN^\infty$
    with $\mathfrak{e}(t) \neq \infty$ for all $t \in \TSet{\Sigma}{\VSet}$
    recursively as $\mathfrak{e}(\O) = 0$, $\mathfrak{e}(\ts(t)) = \mathfrak{e}(t) \semiplus 1$, and 
    $\mathfrak{e}(\tplus(t_1, t_2)) = 2 \semimult \mathfrak{e}(t_1) \semiplus \mathfrak{e}(t_2) \semiplus 1$.
    To prove termination of the rewrite system $\R$ for all terms,
    we show that the two inequations
    required by \cref{thm:boundedness} hold for all instantiated 
    rewrite rules.\footnote{In order to \emph{lift} the inequations from rules
    to reduction steps, one has to ensure that the embedding $\mathfrak{e}$
    is \emph{strictly monotonic}, see, e.g., \cite{baader_nipkow_1998,terese2003term}.}
      For all $t \in \NF_{\mto_{\R}}$ we have $\mathfrak{e}(t) \semigeq \fNF^{\mathsf{cplx}}(\O) = 0$.
    For the rule $\tplus(\ts(x),y) \mto [ \ts(\tplus(x,y)) ]$ and all $t_1, t_2 \in \TSet{\Sigma}{\VSet}$ we get
    \vspace*{-.15cm}
    \[\begin{array}{crcl}
        &\mathfrak{e}(\tplus(\ts(t_1),t_2)) &\semigeq& \aggr{a \mto [b]}(\mathfrak{e}(\ts(\tplus(t_1,t_2)))) \\
        \iff&2 \semimult \mathfrak{e}(t_1) \semiplus \mathfrak{e}(t_2) \semiplus 3 &\semigeq& 1 \semiplus \mathfrak{e}(\ts(\tplus(t_1,t_2))) \\
        \iff&2 \cdot \mathfrak{e}(t_1) + \mathfrak{e}(t_2) + 3 &\semigeq& 2 \semimult \mathfrak{e}(t_1) \semiplus \mathfrak{e}(t_2) \semiplus 3 \;=\; 2 \cdot \mathfrak{e}(t_1) + \mathfrak{e}(t_2) + 3\!
    \end{array}\]
    \vspace*{-.3cm}

    \noindent
    and for the rule $\tplus(\O,y) \mto [y]$ we get $\mathfrak{e}(\tplus(\O, t_1)) \semigeq \aggr{a \mto [b]}(\mathfrak{e}(t_1)) \iff \mathfrak{e}(t_1) + 1 \semigeq \mathfrak{e}(t_1) + 1$.
    Again, by \cref{thm:boundedness} the wARS is bounded, hence the TRS terminates.
\end{example}

If one fixes the semiring $\IS$, 
the interpretation $\fNF$, and the aggregators $\aggr{a \to B}$,
searching for such an embedding $\mathfrak{e}$ 
can often be automated for arbitrary TRSs $\R$ using SMT solvers.

\begin{example}[Complexity and Safety]\label{ex:term-ComplexityandSafety}
    To prove boundedness of the wARS from \Cref{ex:tuple-semi},
    we use the embedding $\mathfrak{e}(n) = (\frac{|n|}{2},\ttrue)$ if $n \in \IZ$ is
    even
    and $\mathfrak{e}(n) = (\infty,\tfalse)$ if $n \in \IZ$ is odd.
    Then we have $\mathfrak{e}(0) = (0, \ttrue) = \fNF(0)$. For odd $n$, we obtain
     $\mathfrak{e}(n) = (\infty,\tfalse) = \mathsf{Aggr}_{n \mto [n-2]}(\mathfrak{e}(n-2))
    = \mathsf{Aggr}_{n \mto [n-2]}(\infty,\tfalse) = (1 + \infty, [n \mod 2 = 0] \vee
    \tfalse)$. For even $n \geq 2$, we get
    $\mathfrak{e}(n) = (\frac{n}{2}, \ttrue) = \mathsf{Aggr}_{n \mto
    [n-2]}(\mathfrak{e}(n-2)) = \mathsf{Aggr}_{n \mto
    [n-2]}(\frac{n}{2}-1, \ttrue) = (1 + \frac{n}{2} -1, [n \mod 2 = 0] \vee
    \ttrue)$. For  even $n \leq -2$, the reasoning is analogous.    
\end{example}


\section{Proving Lower Bounds on Weights}\label{Proving Lower Bounds}

Next, we discuss how to analyze lower bounds.
In \Cref{Approximating}, we show how to compute a
lower bound on weights $\semelem{a}$ and in \Cref{Increasing Loops}, 
we show how to prove unboundedness (i.e., $\semelem{a} = \top$).
Such lower bounds are useful to find bugs or potential attacks, e.g., 
inputs leading to very high computational costs in terms of
runtime or memory consumption.

\subsection{Approximating the Weight}\label{Approximating}

Since the weight $\semelem{a}$ is defined via the supremum of a set, we can approximate
$\semelem{a}$ from below by considering only nodes up to a certain depth
and only certain schedulers. In the following, for every reduction tree $\FT$ and $n \geq
0$, let $\FT|_n$ denote the tree that results from $\FT$ by removing all nodes with depth
$> n$.

\begin{corollary}[Lower Bound on Weight] \label{cor:approx}
    For any $a\in A$ and  $\Psi \subseteq \Theta \subseteq \Phi(a)$, we have
    $$
         \seminull \semileq \semisup\, \{ \semantics{\FT|_{n'}}{}{} \mid \FT\in \Psi, n' \leq n \}
        \semileq \semisup\, \{ \semantics{\FT|_{m'}}{}{} \mid \FT\in \Theta, m' \leq m \}
        \semileq \semelem{a}.
    $$
\end{corollary}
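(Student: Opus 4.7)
The plan is to decompose the chain into its three individual inequalities and verify each one separately. None of the three steps is deep: they all follow directly from (i) the fact that $\seminull$ is the bottom of the lattice, (ii) monotonicity of the supremum under set inclusion, and (iii) the definition of $\semelem{a}$ as a supremum over $\Phi(a)$. I will implicitly assume $n \leq m$, which appears to be the intended reading of the statement (otherwise the middle inequality could fail, since the supremum on the right would range over shallower truncations only).

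\emph{First inequality.} Since $\IS$ is a naturally ordered complete lattice semiring, $\seminull$ is the minimum element (as noted right after \Cref{def:complete-lattice}, because $\seminull \semiplus s = s$ gives $\seminull \semileq s$ for every $s \in \IS$). Hence $\seminull$ is a lower bound of every subset of $\IS$, so in particular $\seminull \semileq \semisup\, \{ \semantics{\FT|_{n'}}{}{} \mid \FT\in \Psi,\, n' \leq n \}$.

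\emph{Middle inequality.} Every element $\semantics{\FT|_{n'}}{}{}$ of the left-hand set arises from some $\FT \in \Psi$ and some $n' \leq n$. By $\Psi \subseteq \Theta$ and $n \leq m$, the same tree satisfies $\FT \in \Theta$ with $n' \leq m$, so setting $m' = n'$ shows that $\semantics{\FT|_{n'}}{}{}$ also lies in the right-hand set. The left set is thus a subset of the right set, and the supremum is monotonic under set inclusion.

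\emph{Last inequality.} The key observation is that for any $\FT \in \Phi(a)$ and any $m' \in \IN$, the truncated tree $\FT|_{m'}$ is itself an element of $\Phi(a)$: its root is still labeled $a$; it has finite depth at most $m'$; and it is a valid $(A,\to)$-reduction tree because \Cref{def:Reduction Tree} allows any node to be a leaf (the clause ``or $vE$ is empty'' applies even when $a_v \notin \NFto$). The nodes at depth exactly $m'$ simply become such leaves, and inner nodes retain the same successor sequence, preserving the reduction step $a_v \to [a_w \mid w \in vE]$. Hence $\semantics{\FT|_{m'}}{}{} \in \{ \semantics{\FT'}{}{} \mid \FT' \in \Phi(a) \}$, so by \Cref{def:element-semiring-semantics} it is bounded above by $\semelem{a}$. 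Taking the supremum over all such truncations preserves the bound.

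The only mildly subtle step is the verification in the last inequality that $\FT|_{m'}$ remains a legal reduction tree; this is immediate from the permissive leaf clause in \Cref{def:Reduction Tree}, and beyond this the corollary is a straightforward consequence of supremum monotonicity.
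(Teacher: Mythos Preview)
Your proof is correct. The paper does not give an explicit proof of this corollary at all; it is stated as an immediate observation following from the definition of $\semelem{a}$ and is used in the subsequent discussion and in the proof of \Cref{thm:loops}. Your decomposition into three inequalities, together with the remark that $\FT|_{m'}$ is again a legal $(A,\to)$-reduction tree (hence an element of $\Phi(a)$), spells out precisely what the paper leaves implicit. Your observation that the statement tacitly requires $n \leq m$ is also well taken: the paper does not quantify $n$ and $m$ in the corollary, but the surrounding prose (``consider nodes of larger depth \ldots and more schedulers'') makes clear that this monotone refinement is the intended reading.
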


\Cref{cor:approx} states that we can approximate $\semelem{a}$ by starting with
some reduction tree $\FT$ with root $a$ and considering $\semantics{\FT|_0}{}{}$ as a
first approximation of the weight (where $\semantics{\FT|_0}{}{} = \seminull$ if 
$a \notin \NFto$). We can
consider nodes of larger depth ($\seminull \semileq \semantics{\FT|_0}{}{}
\semileq \semisup \{\semantics{\FT|_0}{}{}, \semantics{\FT|_1}{}{}\} \semileq
\semisup \{\semantics{\FT|_0}{}{}, \semantics{\FT|_1}{}{}, \semantics{\FT|_2}{}{}\}\semileq
\cdots \semileq \semelem{a}$) 
and more schedulers ($\seminull \semileq \semantics{\FT|_n}{}{} \semileq
\semisup \{\semantics{\FT|_n}{}{}, \semantics{\FT'|_n}{}{}\} \semileq \cdots \semileq \semelem{a}$) to refine this approximation.
However, we have to compute  $\semantics{\FT|_n}{}{}$ for all
reduction trees $\FT$ whose root is
labeled with $a$, 
leading to an exponential number of trees depending on the considered depth and the maximal 
number of non-deterministic choices between reduction steps of an object.

\Cref{thm:approx} shows that this number of calculations is not as high as it seems, 
and even feasible for deterministic sARSs.
Monotonicity of the aggregator functions (\Cref{lem:mono-con-agg}) ensures that we have $\seminull \semileq \semantics{\FT|_0}{}{} \semileq \semantics{\FT|_1}{}{} \semileq\semantics{\FT|_2}{}{} \semileq \cdots \semileq \semelem{a}$, i.e., 
to approximate the weight  up to depth $n \in \IN$, 
we do not have to compute $\semantics{\FT|_{n'}}{}{}$ 
for all $n' \leq n$, but just $\semantics{\FT|_n}{}{}$.
Moreover, we do not need to consider several 
reduction trees for deterministic systems, but just the supremum obtained when evaluating
the ``only possible'' reduction
tree ``as much as possible''.

\begin{restatable}[Approximating Deterministic Systems]{theorem}{Approximating}\label{thm:approx}
    Let $(A,\to)$ be a deterministic sARS. Then for every $a \in A$,
    there exists an $(A,\to)$-reduction tree $\FT$ whose root is labeled with $a$ such that
 $\seminull \semileq \semantics{\FT|_0}{}{} \semileq \semantics{\FT|_1}{}{} \semileq
    \semantics{\FT|_2}{}{} \semileq \cdots \semileq \semelem{a}$ and $\semisup \{\semantics{\FT|_n}{}{} \mid n \in \IN\} = \semelem{a}$.
\end{restatable}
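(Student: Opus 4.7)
The plan is to construct a single ``canonical'' reduction tree $\FT$ for $a$ and show that all finite-depth trees in $\Phi(a)$ are dominated by truncations of this canonical tree. Since $(A,\to)$ is deterministic, for each $b \in A$ there is at most one sequence $B_b$ with $b \to B_b$. I would build $\FT$ inductively: label the root by $a$; for any node $v$ with label $a_v$, if $a_v$ has no $\to$-successor sequence then make $v$ a leaf, otherwise attach children labeled by the unique $B_{a_v}$ and recurse. The resulting $\FT$ is an $(A,\to)$-reduction tree with root $a$, possibly of infinite depth, and by construction every node of $\FT$ that is an inner node respects the (unique) reduction of its label.

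Next, I would show the monotonicity chain $\seminull \semileq \semantics{\FT|_0}{}{} \semileq \semantics{\FT|_1}{}{} \semileq \cdots$. Going from $\FT|_n$ to $\FT|_{n+1}$ replaces, at each node $v$ of depth $n$ that was an inner node in $\FT$, a leaf (whose local weight is either $\fNF(a_v)$ if $a_v \in \NFto$ or $\seminull$ otherwise) by the actual one-step expansion $\aggr{a_v \to B_{a_v}}[\semantics{\FT|_{n+1}}{w}{} \mid w \in vE]$. Using $\seminull \semileq s$ for every $s \in \IS$ and the monotonicity of aggregator functions established in \Cref{lem:mono-con-agg}, each local replacement can only weakly increase the value at $v$. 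Propagating this upward through the iterated aggregators (again by monotonicity) yields $\semantics{\FT|_n}{}{} \semileq \semantics{\FT|_{n+1}}{}{}$. The bound $\semantics{\FT|_n}{}{} \semileq \semelem{a}$ for all $n$ is immediate since $\FT|_n \in \Phi(a)$, so taking the supremum gives $\semisup \{\semantics{\FT|_n}{}{} \mid n \in \IN\} \semileq \semelem{a}$.

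For the reverse inequality, I would argue that \emph{every} $\FT' \in \Phi(a)$ is, in a precise sense, ``dominated'' by some $\FT|_d$. Since $(A,\to)$ is deterministic, any node of $\FT'$ that is an inner node must expand according to the unique reduction step of its label, so its subtree structure coincides with that of $\FT$ down to the point where $\FT'$ chooses to leave a node as a (non-normal-form) leaf. Identifying nodes of $\FT'$ with the corresponding nodes of $\FT$ via their root-to-node paths, one obtains that $\FT'$ is the tree $\FT|_d$ (where $d$ is the depth of $\FT'$) with some subtrees replaced by leaves whose local weight is $\seminull$. By the same monotonicity argument as above, cutting a subtree down to a $\seminull$-leaf can only weakly decrease the global weight, so $\semantics{\FT'}{}{} \semileq \semantics{\FT|_d}{}{} \semileq \semisup\{\semantics{\FT|_n}{}{} \mid n \in \IN\}$. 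Taking the supremum over $\FT' \in \Phi(a)$ gives $\semelem{a} \semileq \semisup\{\semantics{\FT|_n}{}{} \mid n \in \IN\}$, completing the equality.

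The main obstacle I anticipate is the bookkeeping needed to formalise ``$\FT'$ is $\FT|_d$ with some subtrees pruned to $\seminull$-leaves'' and to lift the pointwise inequality $\seminull \semileq \semantics{\FT|_{n+1}}{v}{}$ from individual replaced leaves to the root via nested (possibly infinitely branching) aggregator applications. Both steps reduce to an induction on tree depth combined with monotonicity of aggregators (\Cref{lem:mono-con-agg}), which in turn is used in each argument of every $\aggrrule$; the rest of the argument is essentially structural.
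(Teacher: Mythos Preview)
Your proposal is correct and follows essentially the same approach as the paper: build the canonical (maximally expanded) tree $\FT$, use monotonicity of aggregators (\Cref{lem:mono-con-agg}) to get the chain $\semantics{\FT|_0}{}{} \semileq \semantics{\FT|_1}{}{} \semileq \cdots$, and use determinism to identify $\Phi(a)$ with the prefixes of $\FT$. The paper packages the monotonicity step as a single claim ``$n' \leq n$ implies $\semantics{\FT|_{n'}}{v}{} \semileq \semantics{\FT|_n}{v}{}$ for every node $v$'' proved by induction on the height of $v$, and then simply remarks that in the deterministic case $\Phi(a)$ consists exactly of $\FT$ and its prefixes, whence the supremum equality follows. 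Your treatment of the reverse inequality is in fact more explicit than the paper's: you spell out that an arbitrary $\FT' \in \Phi(a)$ is $\FT|_d$ with some subtrees pruned to $\seminull$-leaves and invoke monotonicity again, whereas the paper leaves this step implicit. One small redundancy: in your monotonicity argument, a node $v$ that is an inner node of $\FT$ necessarily has $a_v \notin \NFto$, so the ``$\fNF(a_v)$ if $a_v \in \NFto$'' branch of your case distinction never occurs.
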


\subsection{Proving Unboundedness by Increasing Loops}\label{Increasing Loops}

While the interpretation method of \Cref{thm:boundedness} 
is based on a technique to prove termination of ordinary ARSs,
finding loops (i.e., a non-empty reduction sequence $a \to \cdots \to a$)
is one of the basic methods to disprove termination.
If we find a finite-depth RT $\F{T}$ whose root is labeled with $a$ and some other node of
$\FT$ is also labeled with $a$,
then this obviously shows non-termination of the underlying sARS. The reason is that
we can obtain an RT of infinite depth by simply using the reduction steps from $a$ to
$a$ repeatedly.
For unboundedness, we additionally require that the weight increases with each loop iteration.
However, increasing weights are not sufficient for unboundedness, as shown
by the following example.

\begin{example}[Bounded with Increasing Loop]\label{ex:loop-formal-language}
    Let $A = \{a, b\}$ with $a \to [a]$ and $a \to [b]$.
    Moreover, we take the formal languages semiring $\semilang$ over $\Sigma = \{0,1\}$, the interpretation $\mathsf{f}_{\NF}(b) = \semione_{\semilang} = \{\varepsilon\}$,
    and the aggregators
    $\aggr{a \to [a]}(x) = \left( \{1\} \semimult_{\semilang}
    x \right) \semiplus_{\semilang} x = \{1w \mid w \in x\} \cup x$ and $\aggr{a \to [b]}(x)
    = x$. 
    Obviously, the wARS $(A, \to)$ admits a loop from $a$ to $a$.
    However, we have $\semelem{a} = \{1\}^* \neq \Sigma^* = \semitop$, 
    even though we have a loop with increasing weights.
\end{example}

The problem in \Cref{ex:loop-formal-language} is that $\top$ 
is not the least upper bound of the weights of the increasing loops.
Thus, to infer unboundedness, we require a fixed increase by an element $t$ in each iteration 
such that $\semiplusbig_{i = 1}^{\infty} t = \semitop$.
Then, we can use the least upper bound of the series of increasing loops
as a lower bound for $\semelem{a}$, showing unboundedness.

Before we present the corresponding theorem, we define a partial 
evaluation of finite-depth trees where the label of the leaf $v_0$ is replaced by a variable $X$.

\begin{definition}[Induced Weight Polynomial]\label{Induced Weight Polynomial}
  Let $\FT$ be a  RT of finite depth with a leaf $v_0$ and let $X$ be a specific variable. Then we define:
    \begin{align*}
        \semantics{\FT}{v_0}{v_0} &= X && \\
        \semantics{\FT}{v}{v_0} &= \fNF(a_v) \; &&\text{if $v \neq v_0$ and $a_v \in
          \NFto$},\\
 \semantics{\FT}{v}{v_0} &= \seminull \; &&\text{if $v \neq v_0$ is a leaf and $a_v \notin
   \NFto$},\\
     \semantics{\FT}{v}{v_0} &= \aggr{a_v\to B}\left[
            \semantics{\FT}{w}{v_0} \mid w \in vE \right]
            &&\text{if $v$ is an inner node and $B = [a_w \mid w \in vE]$}
    \end{align*}
    The \defemph{induced weight polynomial} $\PP_{v_0}(\F{T}) \in \IS[X]$ is defined by
    $\PP_{v_0}(\F{T}) = \semantics{\FT}{r}{v_0}$, where $r \in V$ is the root of
    $\F{T}$.
\end{definition}

\begin{example}[Induced Weight Polynomial]\label{ex:induced-weight-poly}
    Reconsider the ARS $(\mathsf{OS}, \to)$ from \Cref{ex:memory-analysis} in \Cref{Expressability:Size}.
    We have the loop 
    $\tidle(\varepsilon) \to \twait(\varepsilon) \to \tidle(P_1) \to \trun(P_1) \to \tidle(\varepsilon)$
    and the corres\-pon\-ding 
    finite reduction tree can be seen in \Cref{fig:reduction_tree_loop}. Here, the only
    leaf $v_0$ is labeled by $\tidle(\varepsilon)$.\linebreak[3]
    If we consider the runtime by using the wARS $\cplx{\mathsf{OS}, \to}$,
    then we get the induced weight\linebreak[3] polynomial $X+4$.
    If we consider the size of the waiting list instead, i.e., the wARS $(\mathsf{OS}, \to,\linebreak[3] \semiarc, 
    \fNF^{\mathsf{size}}, \mathsf{Aggr}_{a \to B}^{\mathsf{size}})$ from \Cref{Expressability:Size},
    then we get the induced weight polynomial $\max\{X,1\}$.
\end{example}

\begin{restatable}[Proving Unboundedness via Increasing Loops]{theorem}{Loops}\label{thm:loops}
    Let $\F{T}$ be a finite RT 
    where both the root $r$ and a leaf $v_0 \neq r$ are labeled with $a$. 
    Moreover, let $t \in \IS$ with $\semiplusbig_{i = 1}^{\infty} t = \semitop$.
    If $\PP_{v_0}(\F{T})(s) \semigeq s \semiplus t$ for all 
    $s \in \IS$,
    then $\semelem{a} = \top$.
\end{restatable}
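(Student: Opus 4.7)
The plan is to construct an ascending sequence of finite-depth reduction trees rooted at $a$ whose weights force the supremum defining $\semelem{a}$ to reach $\semitop$, by iteratively ``plugging'' copies of $\F{T}$ into the distinguished leaf $v_0$. Concretely, set $\F{T}_1 := \F{T}$ and, for $n \geq 1$, let $\F{T}_{n+1}$ be obtained from an outer copy of $\F{T}$ by replacing the leaf $v_0$ with $\F{T}_n$. Since both the replaced leaf and the root of $\F{T}_n$ carry the label $a$, each $\F{T}_n$ is a valid $(A,\to)$-reduction tree of finite depth rooted at $a$, hence $\F{T}_n \in \Phi(a)$.

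Next, I would establish the compositional identity
\[
\semantics{\F{T}_{n+1}}{}{} \;=\; \PP_{v_0}(\F{T})\bigl(\semantics{\F{T}_n}{}{}\bigr)
\]
by a routine structural induction on $\F{T}$: by construction of $\PP_{v_0}(\F{T})$, the aggregator functions above $v_0$ act identically in both expressions, and the formal variable $X$ is substituted by exactly the weight that tree substitution produces at $v_0$. Combined with the base case $\semantics{\F{T}_1}{}{} = \PP_{v_0}(\F{T})(\seminull)$ -- which uses that $v_0$ evaluates to $\seminull$ because $a \notin \NFto$ (the root admits the reduction dictated by $\F{T}$) -- an immediate induction on $n$ yields
\[
\semantics{\F{T}_n}{}{} \;=\; \PP_{v_0}(\F{T})^n(\seminull),
\]
where $\PP_{v_0}(\F{T})^n$ denotes the $n$-fold self-composition. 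Applying the hypothesis $\PP_{v_0}(\F{T})(s) \semigeq s \semiplus t$ together with monotonicity of aggregator functions (\Cref{lem:mono-con-agg}), a second induction on $n$ gives
\[
\PP_{v_0}(\F{T})^n(\seminull) \;\semigeq\; \semiplusbig_{i=1}^{n} t.
\]
Using $\F{T}_n \in \Phi(a)$, the definition of $\semelem{a}$, and \Cref{def:infinite_operations}, we conclude
\[
\semelem{a} \;\semigeq\; \semisup \bigl\{ \semantics{\F{T}_n}{}{} \;\big|\; n \geq 1 \bigr\} \;\semigeq\; \semisup \bigl\{ \semiplusbig_{i=1}^{n} t \;\big|\; n \geq 1 \bigr\} \;=\; \semiplusbig_{i=1}^{\infty} t \;=\; \semitop,
\]
so $\semelem{a} = \semitop$.

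I expect the main obstacle to lie in the compositional identity of the second paragraph: one must carefully formalise ``plug $\F{T}_n$ into $\F{T}$ at $v_0$'' on ordered, labelled trees and verify that evaluating the resulting tree coincides with substituting $\semantics{\F{T}_n}{}{}$ for $X$ in the formal polynomial $\PP_{v_0}(\F{T})$. Once that bookkeeping is in place, every remaining step is a straightforward monotonicity induction, and \Cref{ex:loop-formal-language} explains why the condition $\semiplusbig_{i=1}^{\infty} t = \semitop$ cannot be weakened.
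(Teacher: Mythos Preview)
Your proposal is correct and follows essentially the same route as the paper's proof: both iterate the substitution of $\F{T}$ into its own leaf $v_0$, use the identity $\semantics{\F{T}_{n+1}}{}{} = \PP_{v_0}(\F{T})(\semantics{\F{T}_n}{}{})$ together with $\semantics{\F{T}_1}{}{} = \PP_{v_0}(\F{T})(\seminull)$ (from $a \notin \NFto$), and then induct with the hypothesis and monotonicity of $\semiplus$ to obtain $\semantics{\F{T}_n}{}{} \semigeq \semiplusbig_{i=1}^{n} t$, concluding via the supremum. Your write-up is in fact slightly more careful than the paper's in flagging the compositional identity as the one step that requires bookkeeping.
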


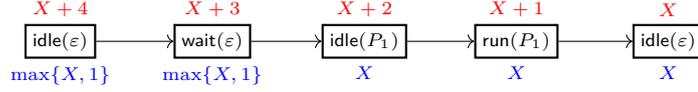
\begin{figure}
    \centering
    \scriptsize
    \begin{tikzpicture}
        \tikzstyle{myRect}=[thick,draw=black!100,fill=white!100,minimum size=4mm, shape=rectangle]
        \tikzstyle{trueRect}=[thick,draw=black!100,fill=lipicsYellow!100,minimum size=4mm, shape=rectangle]

        \node[myRect, label=below:{$\textcolor{blue}{\max\{X,1\}}$}, 
        label=above:{$\textcolor{red}{X+4}$}] at (0, 0)
            (a) {$\tidle(\varepsilon)$};

        \node[myRect, label=below:{$\textcolor{blue}{\max\{X,1\}}$}, 
        label=above:{$\textcolor{red}{X+3}$}] at (2, 0)
            (b1) {$\twait(\varepsilon)$};

        \node[myRect, label=below:{$\textcolor{blue}{X}$}, 
        label=above:{$\textcolor{red}{X+2}$}] at (4, 0)
            (c1) {$\tidle(P_1)$};
        \node[myRect, label=below:{$\textcolor{blue}{X}$}, 
        label=above:{$\textcolor{red}{X+1}$}] at (6, 0)
            (d1) {$\trun(P_1)$};
        \node[myRect, label=below:{$\textcolor{blue}{X}$}, 
        label=above:{$\textcolor{red}{X}$}] at (8, 0)
            (e) {$\tidle(\varepsilon)$};

        \draw (a) edge[->] (b1);
        \draw (b1) edge[->] (c1);
        \draw (c1) edge[->] (d1);
        \draw (d1) edge[->] (e);
    \end{tikzpicture}
    \caption{An example of a finite reduction tree containing a loop from $\tidle(\varepsilon)$ to itself.
    The corresponding evaluation of the induced weight polynomial for the runtime is
    depicted in red above the nodes, and for the space of the waiting list in blue below the nodes.}
    \label{fig:reduction_tree_loop}
\end{figure}

\begin{example}[Unbounded Runtime]\label{Unbounded Example}
    Continuing \Cref{ex:induced-weight-poly},
    we can use \Cref{thm:loops} to prove that the runtime of $(\mathsf{OS}, \to)$
    (i.e.,  the wARS $\cplx{\mathsf{OS}, \to}$) is unbounded.
    Since the induced weight polynomial of the loop $\FT$
    from \Cref{ex:induced-weight-poly}
    is $X + 4$,
    $\sum_{i = 1}^{\infty} 4 = \infty$, and 
    $\PP_{v_0}(\F{T})(s) \semigeq s + 4$ for every $s \in \seminatext$,
    we obtain $\semelem{\tidle(\varepsilon)} = \infty$.
    However, we cannot use \Cref{thm:loops} to prove that the
     memory consumption of $(\mathsf{OS}, \to)$ (i.e.,  the wARS $(\mathsf{OS}, \to, \semiarc, 
    \fNF^{\mathsf{size}}, \mathsf{Aggr}_{a \to B}^{\mathsf{size}})$ to express the size of the
    waiting list) is unbounded, as there is no
    $t\in \semiarc\setminus \{\top\}$ with
    $\semiplusbig_{i = 1}^{\infty} t = \semitop$.
\end{example}

There exist several automatic approaches to find loops in, e.g., term rewriting.
To lift these techniques to automatic unboundedness proofs for  wARSs
based on TRSs, one has to formalize the additional property 
$\exists t \in \IS: \semiplusbig_{i = 1}^{\infty} t = \semitop 
\land \forall s \in \IS: \PP_{v_0}(\F{T})(s) \semigeq s \semiplus t$
as an SMT problem over the corresponding semiring theory.


\section{Conclusion}\label{Conclusion}

We have developed semiring semantics for
abstract reduction systems using arbitrary complete lattice semirings.
These semantics capture and generalize numerous formalisms that have been studied in the literature.
Due to our generalization of these formalisms, 
we can now use techniques and ideas from, e.g., termination analysis,
to prove boundedness or other properties (or combinations of properties)
of reduction systems 
using a completely different semiring (e.g., a tuple semiring).
In the future, this may be used to improve the
automation of specific analyses and lead to
further applications of our uniform framework.

There are many directions for future work, e.g.,
one can try to improve our techniques
on proving and disproving boundedness.
In order to develop techniques amenable to automation
one could focus on term rewrite systems
where the reduction relation is represented by a finite set of rules.
For proving termination of TRSs, there exist more powerful techniques than just using
interpretations (e.g., the \emph{dependency pair framework} \cite{giesl2006mechanizing}).
Thus, we aim to develop a similar framework to analyze
boundedness for weighted TRSs in the future.
Currently our approach only focuses on (bounds on) worst-case reductions. Therefore, in
the future we will also investigate extensions in order to also express and analyze properties
like starvation freedom where one has to consider all (infinite) reductions.
We are also interested in adapting concepts like confluence and unique normal forms to weighted
rewriting, e.g., by studying rewrite systems where every
reduction tree that starts with the same object has the same weight if it
is evaluated ``as much as possible''.

\medskip

\noindent
\textbf{Acknowledgements:} We thank the reviewers for their useful remarks and suggestions.

\bibliography{biblio.bib}

\providecommand{\noopsort}[1]{}
\begin{thebibliography}{10}

\bibitem{Abramsky94}
Samson Abramsky, Dov~M. Gabbay, and Tom S.~E. Maibaum.
\newblock {\em Handbook of Logic in Computer Science. Volume 3. Semantic Structures}.
\newblock Clarendon Press, 1994.
\newblock URL: \url{https://global.oup.com/academic/product/handbook-of-logic-in-computer-science-9780198537625}.

\bibitem{avanzini2020probabilistic}
Martin Avanzini, Ugo Dal~Lago, and Akihisa Yamada.
\newblock On probabilistic term rewriting.
\newblock {\em Sci.\ Comput.\ Program.}, 185, 2020.
\newblock \href {https://doi.org/10.1016/j.scico.2019.102338} {\path{doi:10.1016/j.scico.2019.102338}}.

\bibitem{avanziniModularCostAnalysis2020}
Martin Avanzini, Georg Moser, and Michael Schaper.
\newblock A modular cost analysis for probabilistic programs.
\newblock {\em Proc. ACM Program. Lang.}, 4, 2020.
\newblock \href {https://doi.org/10.1145/3428240} {\path{doi:10.1145/3428240}}.

\bibitem{baader_nipkow_1998}
Franz Baader and Tobias Nipkow.
\newblock {\em Term Rewriting and All That}.
\newblock Cambridge University Press, 1998.
\newblock \href {https://doi.org/10.1017/CBO9781139172752} {\path{doi:10.1017/CBO9781139172752}}.

\bibitem{BK08}
Christel Baier and Joost-Pieter Katoen.
\newblock {\em Principles of Model Checking}.
\newblock MIT Press, 2008.

\bibitem{BatzGKKW22}
Kevin Batz, Adrian Gallus, Benjamin~Lucien Kaminski, Joost{-}Pieter Katoen, and Tobias Winkler.
\newblock Weighted programming: a programming paradigm for specifying mathematical models.
\newblock {\em Proc. {ACM} Program. Lang.}, 6({OOPSLA1}):1--30, 2022.
\newblock \href {https://doi.org/10.1145/3527310} {\path{doi:10.1145/3527310}}.

\bibitem{BELLE2020181}
Vaishak Belle and Luc {De Raedt}.
\newblock Semiring programming: A semantic framework for generalized sum product problems.
\newblock {\em International Journal of Approximate Reasoning}, 126:181--201, 2020.
\newblock \href {https://doi.org/10.1016/j.ijar.2020.08.001} {\path{doi:10.1016/j.ijar.2020.08.001}}.

\bibitem{bonfanteAlgorithmsPolynomialInterpretation2001}
Guillaume Bonfante, Adam Cichon, Jean{-}Yves Marion, and H{\'{e}}l{\`{e}}ne Touzet.
\newblock Algorithms with polynomial interpretation termination proof.
\newblock {\em J. Funct. Program.}, 11(1):33--53, 2001.
\newblock \href {https://doi.org/10.1017/S0956796800003877} {\path{doi:10.1017/S0956796800003877}}.

\bibitem{bonfante2011QuasiinterpretationsWayControl}
Guillaume Bonfante, Jean{-}Yves Marion, and Jean{-}Yves Moyen.
\newblock Quasi-interpretations a way to control resources.
\newblock {\em heor. Comput. Sci.}, 412(25):2776--2796, 2011.
\newblock \href {https://doi.org/10.1016/j.tcs.2011.02.007} {\path{doi:10.1016/j.tcs.2011.02.007}}.

\bibitem{BournezRTA02}
Olivier Bournez and Claude Kirchner.
\newblock Probabilistic rewrite strategies. {Applications} to {{{\textsf{{{ELAN}}}}}}.
\newblock In {\em Proc.\ RTA~'02}, LNCS 2378, pages 252--266, 2002.
\newblock \href {https://doi.org/10.1007/3-540-45610-4_18} {\path{doi:10.1007/3-540-45610-4_18}}.

\bibitem{bournez2005proving}
Olivier Bournez\noopsort{1} and Florent Garnier.
\newblock Proving positive almost-sure termination.
\newblock In {\em Proc.\ RTA~'05}, LNCS 3467, pages 323--337, 2005.
\newblock \href {https://doi.org/10.1007/978-3-540-32033-3_24} {\path{doi:10.1007/978-3-540-32033-3_24}}.

\bibitem{brinke2024SemiringProvenanceInfinite}
Sophie Brinke, Erich Grädel, Lovro Mrkonjić, and Matthias Naaf.
\newblock Semiring provenance in the infinite.
\newblock In {\em The {{Provenance}} of {{Elegance}} in {{Computation}} - {{Essays Dedicated}} to {{Val Tannen}}}, OASIcs 119, pages 3:1--3:26, 2024.
\newblock \href {https://doi.org/10.4230/OASIcs.Tannen.3} {\path{doi:10.4230/OASIcs.Tannen.3}}.

\bibitem{CheneyCT09}
James Cheney, Laura Chiticariu, and Wang~Chiew Tan.
\newblock Provenance in databases: Why, how, and where.
\newblock {\em Found. Trends Databases}, 1(4):379--474, 2009.
\newblock \href {https://doi.org/10.1561/1900000006} {\path{doi:10.1561/1900000006}}.

\bibitem{dannert2019ProvenanceAnalysisPerspective}
Katrin~M. Dannert and Erich Grädel.
\newblock Provenance analysis: A perspective for description logics?
\newblock In {\em Description {{Logic}}, {{Theory Combination}}, and {{All That}}: {{Essays Dedicated}} to {{Franz Baader}} on the {{Occasion}} of {{His}} 60th {{Birthday}}}, LNCS 11560, pages 266--285, 2019.
\newblock \href {https://doi.org/10.1007/978-3-030-22102-7_12} {\path{doi:10.1007/978-3-030-22102-7_12}}.

\bibitem{droste2009HandbookWeightedAutomata}
Manfred Droste, Werner Kuich, and Heiko Vogler, editors.
\newblock {\em Handbook of {{Weighted Automata}}}.
\newblock Springer, 2009.
\newblock \href {https://doi.org/10.1007/978-3-642-01492-5} {\path{doi:10.1007/978-3-642-01492-5}}.

\bibitem{giesl2006mechanizing}
Jürgen Giesl, Ren\'{e} Thiemann, Peter Schneider-Kamp, and Stephan Falke.
\newblock Mechanizing and improving dependency pairs.
\newblock {\em Journal of Automated Reasoning}, 37(3):155--203, 2006.
\newblock \href {https://doi.org/10.1007/s10817-006-9057-7} {\path{doi:10.1007/s10817-006-9057-7}}.

\bibitem{termcomp}
J{\"{u}}rgen Giesl\noopsort{4}, Albert Rubio, Christian Sternagel, Johannes Waldmann, and Akihisa Yamada.
\newblock The termination and complexity competition.
\newblock In {\em Proc.\ TACAS~'19}, LNCS 11429, pages 156--166, 2019.
\newblock Website of \emph{TermComp}: \url{https://termination-portal.org/wiki/Termination_Competition}.
\newblock \href {https://doi.org/10.1007/978-3-030-17502-3_10} {\path{doi:10.1007/978-3-030-17502-3_10}}.

\bibitem{glavic2021DataProvenance}
Boris Glavic.
\newblock Data provenance.
\newblock {\em Found. Trends Databases}, 9(3-4):209--441, 2021.
\newblock \href {https://doi.org/10.1561/1900000068} {\path{doi:10.1561/1900000068}}.

\bibitem{tannen2007ProvenanceSemirings}
Todd~J. Green, Grigoris Karvounarakis, and Val Tannen.
\newblock Provenance semirings.
\newblock In {\em Proc.\ PODS~'07}, pages 31--40, 2007.
\newblock \href {https://doi.org/10.1145/1265530.1265535} {\path{doi:10.1145/1265530.1265535}}.

\bibitem{green2017SemiringFrameworkDatabase}
Todd~J. Green and Val Tannen.
\newblock The semiring framework for database provenance.
\newblock In {\em Proc.\ PODS~'17}, pages 93--99, 2017.
\newblock \href {https://doi.org/10.1145/3034786.3056125} {\path{doi:10.1145/3034786.3056125}}.

\bibitem{hofbauerTerminationProofsLength1989}
Dieter Hofbauer and Clemens Lautemann.
\newblock Termination proofs and the length of derivations.
\newblock In {\em Proc.\ RTA~'89}, volume LNCS 355, pages 167--177, 1989.
\newblock \href {https://doi.org/10.1007/3-540-51081-8_107} {\path{doi:10.1007/3-540-51081-8_107}}.

\bibitem{KhamisNPSW24}
Mahmoud~Abo Khamis, Hung~Q. Ngo, Reinhard Pichler, Dan Suciu, and Yisu~Remy Wang.
\newblock Convergence of {{Datalog}} over (pre-) semirings.
\newblock {\em J. {ACM}}, 71(2):8:1--8:55, 2024.
\newblock \href {https://doi.org/10.1145/3643027} {\path{doi:10.1145/3643027}}.

\bibitem{kop2023CostSizeSemanticsCallByValue}
Cynthia Kop and Deivid Vale.
\newblock Cost-size semantics for call-by-value higher-order rewriting.
\newblock In {\em Proc.\ FSCD~'23}, LIPIcs 260, 2023.
\newblock \href {https://doi.org/10.4230/LIPIcs.FSCD.2023.15} {\path{doi:10.4230/LIPIcs.FSCD.2023.15}}.

\bibitem{laird2013WeightedRLambda}
Jim Laird, Giulio Manzonetto, Guy McCusker, and Michele Pagani.
\newblock Weighted relational models of typed lambda-calculi.
\newblock In {\em Proc.\ LICS~'13}, pages 301--310, 2013.
\newblock \href {https://doi.org/10.1109/LICS.2013.36} {\path{doi:10.1109/LICS.2013.36}}.

\bibitem{lankford1979proving}
Dallas~S. Lankford.
\newblock On proving term rewriting systems are {N}oetherian.
\newblock Memo mtp-3, math.\ dept.,, Louisiana Technical University, Ruston, LA, 1979.
\newblock URL: \url{http://www.ens-lyon.fr/LIP/REWRITING/TERMINATION/Lankford_Poly_Term.pdf}.

\bibitem{lankford1979ProvingTermRewriting}
Dallas~S. Lankford.
\newblock On proving term rewriting systems are {{Noetherian}}.
\newblock Technical Report Memo MTP-3, Department of Mathematics, Louisiana Technical University, 1979.
\newblock URL: \url{https://www.ens-lyon.fr/LIP/REWRITING/TERMINATION/Lankford_Poly_Term.pdf}.

\bibitem{naaf2017ComplexityAnalysisTerm}
Matthias Naaf, Florian Frohn, Marc Brockschmidt, Carsten Fuhs, and J{\"u}rgen Giesl.
\newblock Complexity analysis for term rewriting by integer transition systems.
\newblock In {\em Proc.\ FroCoS~'17}, LNCS 10483, pages 132--150, 2017.
\newblock \href {https://doi.org/10.1007/978-3-319-66167-4_8} {\path{doi:10.1007/978-3-319-66167-4_8}}.

\bibitem{Sedlar23}
Igor Sedl{\'{a}}r.
\newblock Kleene algebra with tests for weighted programs.
\newblock In {\em In Proc.\ ISMVL~'23}, pages 111--116, 2023.
\newblock \href {https://doi.org/10.1109/ISMVL57333.2023.00031} {\path{doi:10.1109/ISMVL57333.2023.00031}}.

\bibitem{terese2003term}
Terese.
\newblock {\em Term Rewriting Systems}, volume~55 of {\em Cambridge Tracts in Theoretical Computer Science}.
\newblock Cambridge University Press, 2003.

\end{thebibliography}

\clearpage
\appendix

\section{Comparison to Weakest Preweightings for Weighted Imperative Programs}\label{WP Comparison}

In this appendix, we briefly compare our semantics to the formalism 
of \emph{weakest preweightings} ($\WP$) for weighted imperative programs from
\cite{BatzGKKW22}, and we present a more complex 
mutual exclusion algorithm from \cite{BatzGKKW22} which would
require
\emph{weakest liberal preweightings} ($\WLP$).

\subsection{Weakest Preweightings}\label{WP Comparison 1}
Assuming familiarity with the concepts and notation introduced in \cite{BatzGKKW22}, 
we show how to adapt them to our setting.
This should serve as a high-level illustration of the relationship between
weighted abstract reduction systems 
and the weighted imperative programs from \cite{BatzGKKW22}.
To do so, we show how to transform the imperative program 
modeling the ski rental problem from \cite{BatzGKKW22} 
into a weighted ARS using the tropical semiring $\semitrop$ such that
$\semantics{\delta_0}{}{}$ is equal to the 
``weakest preweighting''  $\WPSem{\mathsf{SkiAlg}}{\mathbf{1}}$
of the \emph{postweighting} $\mathbf{1}$ on the initial \emph{program configuration} $\delta_0$,
where $\mathsf{SkiAlg}$ is depicted in \Cref{alg1}.

\begin{wrapfigure}[10]{r}{0.22\textwidth}
    \begin{minipage}{0.22\textwidth}
        \vspace*{-.7cm}
        \begin{algorithm}[H]
            \DontPrintSemicolon%
            \SetInd{-.3cm}{.3cm}%
            \DecMargin{1cm}%
            \caption{}
            \label{alg1}
            \hspace*{-.5cm}   \While{$n > 0$}
            {
                $\{$\;
                $\phantom{\{} \!\semimult 1;$\;
                $\phantom{\{} n := n-1;$ \hspace*{-3.8cm}\; 
                $\} \semiplus \{$\;
                $\phantom{\{} \!\semimult y;$\;
                $\phantom{\{} n := 0;$\hspace*{-3.8cm}\; 
                $\}$
            }
        \end{algorithm}
    \end{minipage}
\end{wrapfigure}
The ski renting problem is a classical optimization problem.
Consider a person that does not own a pair of skis 
but is going on a skiing trip for an initially unknown number of $n \in \IN$ days.
At the dawn of each day, the person can decide whether to
rent a pair of skis for \EUR{1} for that day
or buy their own pair of skis for \EUR{$y$} and go skiing 
without any further costs, instead. 
What is the optimal strategy
for the person to spend a minimal amount of money?
In \cite{BatzGKKW22} this problem has been modeled 
via the imperative program $\mathsf{SkiAlg}$ depicted in \Cref{alg1}.
In general, weighted algorithms contain the standard control-flow instructions 
from the guarded command language ($\mathsf{GCL}$) 
syntax including non-deterministic branching ($\semiplus$),
with an additional $\semimult s$ statement,
where $s \in \IS$ is an element from a semiring.
The statement represents a $\mathsf{skip}$ operation (or $\mathsf{noop}$, 
i.e., skipping the execution step and doing nothing) that is used to give
a \emph{weight} to every execution path through the program.
The set of all such programs of the \emph{weighted guarded command language}
is denoted by $\mathsf{wGCL}$.

We first define the corresponding sARS $(A, \to)$ 
representing the possible computations of the program.
Here, we let $A$ be the set of all 
\emph{configurations} of the program
and $\to$ is defined via the \emph{transition} relation 
(\cite{BatzGKKW22}, Definition 3.1).
A configuration $(C, \sigma, n, w)$ consists of the remaining program 
$C \in \mathsf{wGCL} \cup \{\downarrow\}$,
where $\downarrow$ denotes an already terminated program;
an instantiation of the program variables $\sigma: \VSet \to \IN$, 
where $\Sigma$ is the set of all such instantiations;
a number of already performed execution steps $n \in \IN$;
and finally, a string $w \in \{L,R\}^*$ indicating the performed choices 
at non-deterministic steps.

\begin{definition}[sARS for Ski Renting Problem]\label{ski example 1}
    Following the notations from \cite{BatzGKKW22}, let $Q = (\mathsf{wGCL} \cup \{\downarrow\}) \times \Sigma \times \IN \times \{L,R\}^*$ be the set of all \defemph{configurations}.
    Moreover, let $\to \;= \{\delta \to [\delta_1, \delta_2, \ldots] \mid (\delta, w, \delta_i) \in \Delta\}$.
    Here, $\Delta \subseteq Q \times \IS \times Q$ denotes the transition relation according 
    to the small-step operational semantics given in \cite{BatzGKKW22},
    where $w$ is the weight of the transition $(\delta, w, \delta_i)$ from configuration
    $\delta$ to configuration $\delta_i$.
    To determine the order of the configurations $\delta_1, \delta_2, \ldots$, 
    we use an arbitrary total order on the transitions.
    The sARS representing the algorithm of the ski renting problem is $(Q,\to)$.
\end{definition}

To answer the question of the ski renting problem, 
one can compute the weakest preweighting $\WPSem{\mathsf{SkiAlg}}{\mathbf{1}}$
considering the tropical semiring $\semitrop = \semitroplong$
and the \emph{postweighting} $\mathbf{1}$.
When applying $\WPSem{\mathsf{SkiAlg}}{\mathbf{1}}$  to the initial configuration where
the program variable $n$ has the value $n_0$, one obtains 
$n_0 \semiplus y = \min\{n_0,y\}$, indicating that it is only beneficial to buy skis 
if they cost less than the number of skiing days.

A postweighting is a function $f: \Sigma \to \IS$ mapping final states 
to an element in the semiring, 
i.e., this corresponds to our interpretation of the normal forms $\fNF$.
The $\WP$ transformer aggregates the postweighting along the program execution paths by multiplication
and by summing over multiple possibilities during non-determinism.
In our setting, this can be expressed by choosing
aggregators that are weighted sums $\mathsf{Aggr}_{a \to B} = \sum_{1 \leq i \leq |B|} e_i \odot v_i$
for every $a \to B$ with $e_i \in \IS$ for all $1 \leq i \leq |B|$.

\begin{definition}[Aggregators and Interpretation for Ski Renting Problem]\label{ski example 2}
    Given a postweighting $f$, 
    we define the interpretation of normal forms
    as $\fNF = f$ and the aggregators as 
    $\mathsf{Aggr}_{a \to B} = \semiplusbig [w_i \semimult v_i \mid (a,w_i,b_i) \in \Delta]$.
\end{definition}

Thus, we obtain the following weighted abstract reduction system.

\begin{definition}[wARS for the Ski Renting Problem]
  The wARS for the ski renting problem is $(Q, \to, \semitrop, \fNF, \aggrrule)$,
with $Q$, $\to$ as in \Cref{ski example 1} and $\fNF$, $\aggrrule$ as in \Cref{ski example 2}.
\end{definition}

In the end, we get $\semantics{\delta_0}{}{} =
\WPSem{\mathsf{SkiAlg}}{\mathbf{1}} (\delta_0) = n_0 \semiplus y = \max\{n_0,y\}$,
where $\delta_0$ is the initial configuration 
that sets the program variable $n$ to the natural number $n_0 \in \IN$. 

Thus, we can express weakest preweightings from \cite{BatzGKKW22}
in our formalism.
One might think that we can even express more than with weakest preweightings, 
as we can use aggregators that 
are not just simple weighted sums but arbitrary polynomials.
However, we expect that one can transform any $\mathsf{wGCL}$ program $C$ 
into another $\mathsf{wGCL}$ program $C'$ that can represent more complex
aggregators of $C$ by ordinary weighted sums of $C'$.
This might be an interesting direction for future research.

\subsection{Weakest Liberal Preweightings}\label{WP Comparison 2}

In order to express ``weakest liberal preweightings'' ($\WLP$) 
from \cite{BatzGKKW22} in our setting,
we would have to extend our definition of  $\semelem{a}$ in
\Cref{def:semiringSemantics}.
This is similar to the problem of analyzing best-case
reductions described in \Cref{Expressability:WP}.

\begin{wrapfigure}[18]{r}{0.4\textwidth}
    \begin{minipage}{0.4\textwidth}

        \vspace*{-1cm}
        \begin{algorithm}[H]
            \DontPrintSemicolon%
            \SetInd{0cm}{.5cm}%
            \caption{}
            \label{alg2}
            \While{$\ttrue$}
            {
                $\semiplusbig_{j = 1}^N \{ i := j\};$\;
                \If{$\ell[i] = n$}{
                    $\ell[i] := w;$\;
                }
                \If{$\ell[i] = w$}{
                    \If{$y > 0$}{
                        $\semimult C_i;$ \; 
                        $y := y-1;$ \;
                        $\ell[i] := c;$\;
                    }\Else{
                        $\semimult W_i;$\;
                    }
                }
                \If{$\ell[i] = c$}{
                    $\semimult R_i;$ \; 
                    $y := y+1;$ \;
                    $\ell[i] := n;$\;
                }
            }
        \end{algorithm}
    \end{minipage}
\end{wrapfigure}
To see why we currently cannot express $\WLP$, 
consider \Cref{alg2} which describes
an operating system guaranteeing mutual exclusion. This algo\-rithm
 from
\cite{BatzGKKW22} (adapted from \cite{BK08}) 
handles $N$ processes that want to access a shared critical section which
 may only
be accessed by $y$ processes simultaneously. The status $\ell[i]$ of a selected process
$i$ can be either idle $(n)$, waiting $(w)$, or critical $(c)$. If the process $i$ is
idle, it becomes waiting. If it is waiting, one checks whether the shared section may be
entered $(y>0)$. Otherwise (if $y=0$), the process keeps on waiting. If process $i$ is
already in the critical section, it releases it and $y$ is updated.

In \cite{BatzGKKW22}, the natural language semiring and the alphabet $\Sigma = \{C_i, W_i,
R_i \mid 1 \leq i \leq N\}$ is used
to describe the different actions, i.e.,
if process $i$ enters the \underline{c}ritical section, \underline{w}aits, or
\underline{r}eleases the critical section, the corresponding branch is weighted by $C_i$, $W_i$, or $R_i$, respectively.
Now, $\WLP$ allows to reason
about the infinite paths represented by this loop.
More precisely, one can analyze the language of $\omega$-words produced by the loop via $\WLP$,
and show that there is a word like $W_2^{\omega}$ in the language, which means that process $2$ can wait infinitely long.
This disproves starvation freedom of \Cref{alg2}.

We can express this algorithm as an sARS similar to the one
in \Cref{ski example 1}.
As in \Cref{ex:starvation-freedom},  to analyze starvation freedom of such an algorithm, 
we can use the tuple semiring $\bigtimes_{i=1}^N \seminatext$.
Then the $i$-th component in a tuple describes how often the process $i$ has
entered the critical section.
In the steps with the weight $C_i$, the 
process $i$ enters
the critical section, so that we have to increase the value in the $i$-th component of the tuple.
The corresponding aggregator for such a step would be $e_i \semiplus v_1$, where $e_i$ denotes the tuple $(0, \ldots,0, 1, 0, \ldots, 0)$
where the $i$-th component is 1 and all other components are 0.
However, as in
\Cref{ex:starvation-freedom},
starvation freedom cannot be expressed with our current definition of $\semelem{a}$, but
we would have to analyze (bounds on) best-case reductions.

\report{

\section{Additional Theory and Proofs}\label{Additional Theory}

In the following, we state and prove the lemmas and theorems
that were mentioned throughout the paper.

\TupleSem*

\begin{proof}
    Since $\IS_i$ is a semiring for every $1 \leq i \leq n$,
    it follows directly that $\IS = \bigtimes_{i = 1}^{n} \IS_i$ is a semiring as well,
    as the addition and multiplication are defined pointwise.
    Moreover, we have:
    \begin{itemize}
        \item \emph{Naturally ordered}: Assume for a contradiction 
        that $\IS$ is not naturally ordered.
        Then, $\semileq_{\IS}$ is not antisymmetric, 
        i.e., there exist $(x_1, \ldots, x_n) \neq (y_1, \ldots, y_n)$ 
        such that $(x_1, \ldots, x_n) \semileq_{\IS} (y_1, \ldots, y_n)$ 
        and $(y_1, \ldots, y_n) \semileq_{\IS} (x_1, \ldots, x_n)$. 
        Since addition is pointwise, also the order is defined pointwise, 
        and hence, we have $x_i \semileq_{\IS_i} y_i$ 
        and $y_i \semileq_{\IS_i} x_i$ for all $1 \leq i \leq n$.
        Since $(x_1, \ldots, x_n) \neq (y_1, \ldots, y_n)$ 
        there must be a $1 \leq i \leq n$ such that $x_i \neq y_i$. 
        Then we have $x_i \neq y_i$, $x_i \semileq_{\IS_i} y_i$, 
        and $y_i \semileq_{\IS_i} x_i$, which is a contradiction to the antisymmetry of
        $\semileq_{\IS_i}$,  
        since $\IS_i$ is assumed to be naturally ordered.
        \item \emph{Complete Lattice}: Let $T \subseteq \IS$. 
        Its supremum $\semisup T$ is given by the supremum of each point, 
        i.e., $\semisup T = (\semisup T_1, \ldots, \semisup T_n)$, 
        where $T_i = \{x_i \mid (x_1, \ldots, x_n) \in T\}$. 
        Obviously, $(\semisup T_1, \ldots, \semisup T_n)$ is an upper bound for $T$: 
        For every $(x_1, \ldots, x_n) \in T$ there exist $u_1 \in \IS_1, \ldots, u_n \in \IS_n$ 
        with $x_i \semiplus_{\IS_i} u_i = \semisup T_i$ for all $1 \leq i \leq n$,
        and hence, $(x_1, \ldots, x_n) \semiplus_{\IS} (u_1, \ldots, u_n) 
        = (\semisup T_1, \ldots, \semisup T_n)$.
        Next, assume that there exists another upper bound $(w_1, \ldots, w_n) \in \IS$ 
        such that $(x_1, \ldots, x_n) \semileq_{\IS} (w_1, \ldots, w_n) 
        \semileq_{\IS} (\semisup T_1, \ldots, \semisup T_n)$. 
        Then, for all $1 \leq i \leq n$
        and all $x_i \in T_i$, 
        we have
        $x_i \semileq_{\IS_i} w_i \semileq_{\IS_i} \semisup T_i$. Since 
         $\semisup T_i$ is the supremum of $T_i$, this implies $w_i = \semisup T_i$. \qedhere
    \end{itemize}
\end{proof}

\SuffBoundOne*

\begin{proof}
    If $\fNF(a) = \top$ for $a \in \NFto$, then there exists a reduction tree
    consisting of a single node with label $a$ and weight $\top$, hence
    $\semelem{a} = \top$ and the wARS is not bounded.

    Now assume that the interpretation $\fNF$ is universally bounded by $\bound
    \in \IS\setminus\{\top\}$ and all aggregators $\aggrrule$ are selective. Let
    $\FT = (V,E)$ be a $(A,\to)$-reduction tree of finite depth. We show
    $\semantics{\FT}{v}{} \semileq \bound$ for all nodes $v \in V$ by induction on the
    height of $v$.   
    As usual, the \emph{height} of a node $v$ in a tree $\FT$ is the length
    of the maximal path from $v$ to a leaf in $\FT$.
    Hence, then we also have $\semelem{\FT} =\semantics{\FT}{r}{}  \semileq \bound$
    for the root $r$ of $\FT$,
    which is the node of maximal height in $\FT$.

    If $a_v \in\NFto$, then we have
    $\semantics{\FT}{v}{} = \fNF(a_v) \semileq \bound$.
    Otherwise, if $v$ is a leaf and $a_v \notin\NFto$, then
    $\semantics{\FT}{v}{} = \seminull \semileq \bound$.
    Finally, if  $v$ is an inner node and thus $a_v \notin\NFto$, then
    $\semantics{\FT}{v}{} =\aggr{a_v\to B}\left[
    \semantics{\FT}{w}{} \mid w \in vE \right] = 
    \semantics{\FT}{w}{}$ for some $w\in vE$, since the
    aggregator function is selective. Thus,
    $\semantics{\FT}{v}{} = \semantics{\FT}{w}{} \semileq \bound$ follows by the induction
    hypothesis.
\end{proof}

\SuffBoundTwo*

\begin{proof}
    For every $a \in A$ we prove that $\semelem{a} \neq \top$.
    Since the sARS is terminating, finitely
    non-deterministic, and finitely branching, there only exist finitely many trees in
    $\Phi(a)$. Thus,  their depth is bounded by
    some $D\in \IN$.

    Hence, it suffices to show that for every $a \in A$ and every $0 \leq n \leq D$, 
    there exists a constant $C_{a,n} \neq \top$ such
    that $\semantics{\FT}{v}{} \semileq C_{a,n}$ holds for all RTs $\FT \in \Phi(a)$
    and all their nodes $v$ of height $n$. 
    Then, $\semantics{\FT}{}{}  =\semantics{\FT}{r}{}
    \semileq C_{a,D}$ holds
    for the root $r$ of $\FT$. As this holds for 
    all 
    $\FT \in \Phi(a)$, we obtain  $\semelem{a} \semileq C_{a,D} \neq \top$.

    By induction on $n$, we now construct bounds   $C_{a,n} \neq \top$ such that
    $\semantics{\FT}{v}{} \semileq C_{a,n}$ holds for all
    nodes $v$ of height $n$.
    Since $\IS$ is a complete lattice, every finite set $\{s_1,\ldots,s_k\} \subset \IS
    \setminus \{ \top \}$  
    has an upper bound $\semisup \{s_1,\ldots,s_k\}$, and furthermore, 
    since $\IS$ has the extremal property, 
    we have $\semisup \{s_1,\ldots,s_k\} \semileq s_1 \semiplus \cdots \semiplus
    s_k \neq \top$.
    Let $\{a_1,\ldots,a_k\} \subseteq \NF_{\to}$ be the (finite) set of all normal forms 
    occurring in labels of nodes of trees in $\Phi(a)$,
    and let $C_{a,0} = \semisup \{\fNF(a_1),\ldots,\fNF(a_k)\}$.
    Since $\fNF(a_i) \neq \top$ for all $1 \leq i \leq k$, 
    we obtain $C_{a,0} \neq \top$.
    Then if $a_v \in\NFto$, we have
    $\semantics{\FT}{v}{} = \fNF(a_v) \semileq C_{a,0}$.
    Otherwise, if $v$ is a leaf and $a_v \notin\NFto$,
    then
    $\semantics{\FT}{v}{} = \seminull \semileq  C_{a,0}$.

    Now we regard the case $n \geq 1$.
    Here, $v$ is an inner node and thus, $a_v \notin\NFto$. Thus,
    $\semantics{\FT}{v}{} =\aggr{a_v\to B}\left[
    \semantics{\FT}{w}{} \mid w \in vE \right]$.
    By the induction hypothesis, we have 
    $\semantics{\FT}{w}{} \semileq C_{a,n-1}$
    and hence,  $\semantics{\FT}{w}{} \neq \top$ for all $w \in vE$.
    Since by assumption, aggregators are finite, the semiring $\IS$ has the extremal
    property, and $B$  is finite,
    $\aggr{a_v \to B}$ also satisfies the extremal property.
    Thus,
    \[
        \semantics{\FT}{v}{} 
        = \aggr{a_v \to B}\left[\semantics{\FT}{w}{} \mid w \in vE \right] 
        \neq \top.
    \] 
    Here, the second step follows from $\semantics{\FT}{w}{} \neq \top$ and the
    extremal property of the aggregator functions.
    Thus, for $n \geq 1$, we now define $C_{a,n} = \semisup
    \{ \semantics{\FT}{v}{} \mid \FT \in \Phi(a)$, $v$ is a node of
    $\FT$ at height $n \}$ and obtain  $C_{a,n} \neq \top$ since $\Phi(a)$ is a finite set of
    finite trees.
\end{proof}

\MonoContAggregator*

\begin{proof}
    The function $\semiplus$ is monotonic, as $\semileq$ is defined via addition.
    Monotonicity of $\semimult$ follows from distributivity:
    Let $s_1, s_2, t \in \IS$ with $s_1 \semileq s_2$.
    Then there exists a $u \in \IS$ with $s_1 \semiplus u = s_2$. Thus,
    $s_1 \semimult t \semileq s_1 \semimult t \semiplus u \semimult t = (s_1 \semiplus u) \semimult t = s_2 \semimult t$.
    The proof for the second argument is similar. Thus, finite sums and products are also monotonic.

    Infinite sums and products are monotonic as well:
    Let $\circ \in \{\semiplus, \semimult\}$ and
    let $[s_1, \ldots], [t_1, \ldots] \in \multiS$ be two infinite sequences
    such that $s_i \semileq t_i$ for all
    $i \in \IN$. 
    By monotonicity of $\circ$ for finite sums and products, we obtain
    $\bigcirc [s_1, \ldots, s_n] \semileq \bigcirc [t_1, \ldots, t_n]$
    for all $n \geq 1$. 
    Thus, \pagebreak[3]
    \begin{align*}
      & \bigcirc [s_1, \ldots] \\      
        = \quad& \semisup \left\{ \bigcirc \semifin \mid
        \semifin \text{ is a finite prefix of } [s_1, \ldots] \right\}\\
    \semileq \quad& \semisup \left\{ \bigcirc \semifin \mid
        \semifin \text{ is a finite prefix of } [t_1, \ldots] \right\}\\
        = \quad& \bigcirc [t_1, \ldots]\!
    \end{align*}
    Hence, infinite sums and products are monotonic as well.
    Finally, aggregator functions are monotonic since they are compositions of monotonic functions.

    Next, we consider continuity.
    Let $\semiplus$ and $\semimult$ be continuous and $\circ \in \{\semiplus, \semimult\}$.
    Again, we consider infinite sums and products first.
    Let $[s_1, \ldots] \in \multiS$
    be an arbitrary infinite sequence and
    let $K \subseteq \IS$. In the following, for two sequences $T, T' \in \multiS$, let $T
    \subseteq T'$ denote that $T$ is a prefix of $T'$.
    Then we obtain
    \begin{align*}
        & \semisup \{\bigcirc [s_1, \ldots, s_{j-1}, k, s_{j+1}, \ldots] \mid k \in K\}\\
        \downarrow&\small \text{(Definition)}\\
        = & \semisup \{\semisup \left\{ \bigcirc [s_1, \ldots, s_m] \mid
        [s_1, \ldots, s_m] \subseteq [s_1, \ldots, s_{j-1}, k, s_{j+1}, \ldots], \;
        m \in \IN \right\}\mid k \in K\}\\
        \downarrow&\small \; \text{(Combining the nested suprema to a supremum over a single set)}\\
        = & \semisup \Big\{\bigcirc [s_1, \ldots, s_m] \mid
        [s_1, \ldots, s_m] \subseteq [s_1, \ldots, s_{j-1}, k, s_{j+1}, \ldots], \;
        m \in \IN, k \in K\Big\}\\
        \downarrow&\small \; \text{(Partition set into sequences of length $m \geq j$ (containing $k$) or $m \leq j-1$ (not containing $k$))}\\
        = & \semisup \Big(\Big\{\bigcirc [s_1, \ldots, s_{j-1}, k, s_{j+1}, \ldots, s_m] \\
        & \qquad\qquad \Bigm\vert
        [s_1, \ldots, s_m] \subseteq [s_1, \ldots, s_{j-1}, s_{j+1}, \ldots], \;
        m \in \IN, m \geq j-1, k \in K\Big\} \\
        & \qquad \cup \Big\{\bigcirc [s_1, \ldots, s_m] \mid
        [s_1, \ldots, s_m] \subseteq [s_1, \ldots, s_{j-1}]\Big\} \Big)\\
        \downarrow&\small \; \text{(Creating nested suprema again)}\\
        = & \semisup \Big(\Big\{\semisup \{\bigcirc [s_1, \ldots,s_{j-1}, k, s_{j+1}, \ldots, s_m] \mid k \in K\}\\
        & \qquad\qquad \Bigm\vert [s_1, \ldots, s_m] \subseteq [s_1, \ldots, s_{j-1}, s_{j+1}, \ldots], \;
        m \in \IN, m \geq j-1\Big\} \\
        & \qquad \cup \Big\{\bigcirc [s_1, \ldots, s_m] \mid
        [s_1, \ldots, s_m] \subseteq [s_1, \ldots, s_{j-1}]\Big\} \Big)\\
        \downarrow&\small \; \text{(Continuity of finite sums and products)}\\
        = & \semisup \Big(\Big\{\bigcirc [s_1, \ldots, s_{j-1}, \semisup K, s_{j+1}, \ldots, s_m]\\
        & \qquad\qquad \Bigm\vert[s_1, \ldots, s_m] \subseteq [s_1, \ldots, s_{j-1}, s_{j+1}, \ldots], \;
        m \in \IN, m \geq j-1\Big\} \\
        & \qquad \cup \Big\{\bigcirc [s_1, \ldots, s_m] \mid
        [s_1, \ldots, s_m] \subseteq [s_1, \ldots, s_{j-1}]\Big\} \Big)\\
        \downarrow&\small \; \text{(Combining both sets of the union again)}\\
        = & \semisup \Big\{ \bigcirc [s_1, \ldots, s_m] \mid
        [s_1, \ldots, s_m] \subseteq [s_1, \ldots, s_{j-1}, \semisup K, s_{j+1}, \ldots], \;
        m \in \IN \Big\}\\
        \downarrow&\small \; \text{(Definition)}\\
        = & \bigcirc [s_1, \ldots, s_{j-1}, \semisup K, s_{j+1}, \ldots]\!
    \end{align*}
    Finally, aggregator functions are continuous since they are compositions of continuous functions.
\end{proof}

\BoundSoundAndComplete*

\begin{proof}
    We first show the ``if'' direction. To this end, we prove that
    for all reduction trees $\FT = (V,E)$ of finite depth and all nodes $v$ in $\FT$,
    we have 
    \begin{equation}
        \label{interpretationClaim}
        \mathfrak{e}(a_v) \quad \semigeq \quad \semantics{\FT}{v}{}.
    \end{equation}
    To see why this claim implies the ``if'' direction of the theorem, note that
    \eqref{interpretationClaim} implies
    $\mathfrak{e}(a)  \semigeq \semantics{\FT}{}{}$
    for all $a\in A$ and $\FT\in \Phi(a)$. This in turn implies $\mathfrak{e}(a) \semigeq
    \semelem{a}$. Since $\mathfrak{e}(a) \neq \top$ for all $a\in A$, the embedding
    provides upper bounds for the weights and proves boundedness of the wARS.

    We now prove the claim \eqref{interpretationClaim} by induction on the height of the
    node $v$.
    If $a_v \in\NFto$, then
    $\mathfrak{e}(a_v)\semigeq \fNF(a_v) = \semantics{\FT}{v}{}$.
    Otherwise, if $v$ is a leaf and $a_v \notin\NFto$, then
    $\mathfrak{e}(a_v)\semigeq \seminull = 
    \semantics{\FT}{v}{}$.
    Finally, if $v$ is an inner node and thus $a_v \notin\NFto$, then
    the induction hypothesis implies
    \begin{align*}
        & \forall w \in vE: \mathfrak{e}(a_w) \semigeq \semantics{\FT}{w}{} \\
        \Rightarrow \quad & \aggr{a_v\to B}\left[ \mathfrak{e}(a_w) \mid w \in vE \right]
        \semigeq
        \aggr{a_v\to B}\left[ \semantics{\FT}{w}{}  \mid w \in vE \right]
        \tag{as $\aggr{a_v\to B}$ is
        monotonic by \Cref{lem:mono-con-agg}}\\
        \Rightarrow \quad & \mathfrak{e}(a_v) \semigeq
        \aggr{a_v\to B}\left[ \mathfrak{e}(a_w) \mid w \in vE \right]
        \semigeq
        \aggr{a_v\to B}\left[ \semantics{\FT}{w}{}  \mid w \in vE \right]
        = \semantics{\FT}{v}{}.
        \tag{by the second requirement of the theorem}
    \end{align*}

    Now we show the ``only if'' direction. 
    Let $\semiplus$ and $\semimult$ be continuous, hence every aggregator function is continuous. 
    Suppose that the wARS $\wARS$ is
    bounded. Then we need to show that there exists an embedding $\mathfrak{e}$ that
    satisfies the conditions. We choose $\mathfrak{e}(a) = \semelem{a} $ for all $a \in
    A$. Since the wARS is bounded, this implies $\mathfrak{e}(a) \neq \top$ for all
    $a \in A$. Then if $a \in \NFto$, we have $\mathfrak{e}(a) = \semelem{a} = \semisup 
    \{ \semelem{\FT} \mid \FT \in \mathfrak{e}(a) \} =  \fNF(a)$, i.e., $\mathfrak{e}$ satisfies
    the first requirement of the theorem.
    If $a \notin \NFto$, then
    \begin{align*}
      \mathfrak{e}(a)
      & = \semelem{a}
      = \semisup_{\FT \in \Phi(a)} \semelem{\FT}
      = \semisup_{\FT \in \Phi(a)}  \semantics{\FT}{v}{}
      \tag{where $v$ is root of $\FT$} \\
     & \semigeq \semisup_{\FT \in \overline{\Phi(a)}}  \semantics{\FT}{v}{}
      \tag{$\overline{\Phi(a)} \subseteq \Phi(a)$ are the RTs where the first reduction
        is $a\to B$} \\
      & = \semisup_{\FT = (V,E) \in \overline{\Phi(a)}}
          \aggr{a \to B} \left[\; 
          \semantics{\FT}{w}{} \mid w\in vE \;\right] \\
      &  = \aggr{a \to B} \Big[
          \semisup_{\FT = (V,E) \in \overline{\Phi(a)}} \semantics{\FT}{w}{} \;\Big|\; w\in vE \;\Big]
      \tag{since $\aggrrule$ is continuous} \\
      & = \aggr{a \to B} \left[\;
          \semisup_{\FT \in \Phi(b)} \semantics{\FT}{}{}
          \mid b\in B \;\right]
      = \aggr{a \to B} \left[\;
          \semelem{b} \mid b\in B \;\right] \\
      & = \aggr{a \to B} \left[\;
          \mathfrak{e}(b) \mid b\in B \;\right]. \tag{due to our choice of $\mathfrak{e}$}
    \end{align*}
    This holds for every possible reduction $a\to B$,
    so $\mathfrak{e}$ also satisfies the second requirement of the theorem.
\end{proof}

\Approximating*

\begin{proof}
    We prove that for all reduction trees $\FT = (V,E)$ of finite depth,
    all $n', n \in \IN$, and all nodes
    $v$ of
    $\FT|_{n'}$, we have:
    \[ n' \leq n \quad \text{ implies } \semantics{\FT|_{n'}}{v}{} \semileq
    \semantics{\FT|_n}{v}{}.\]
    The claim is proved by induction on the height of $v$ in $\FT|_n$.
    If $v$ is a leaf of $\FT|_n$, then it is also a leaf of $\FT|_{n'}$
    and thus, the claim is clear, because then $\semantics{\FT|_n}{v}{}$ does not depend on
    $n$.

    Thus, let  $v$ be an inner node of $\FT|_n$
    and thus, $a_v \notin \NFto$. Moreover, let
    $a_v \to B$ with $B = [a_w \mid w \in vE]$. If
    $v$ is a leaf of $\FT|_{n'}$, then
    $\semantics{\FT|_{n'}}{v}{} =
    \seminull$, which proves the claim. Otherwise, we have
    \begin{align*}
    & \semantics{\FT|_{n'}}{v}{} \\
    = \quad &
    \aggr{a_v\to B}\left[\semantics{\FT|_{n'}}{w}{}\mid w \in vE \right]\\
    \semileq\quad &  \aggr{a_v\to B}\left[\semantics{\FT|_{n}}{w}{}\mid w \in vE \right]\\
    =\quad&\semantics{\FT|_{n}}{v}{}
    \end{align*}
    Here, in the but-last line we use that
    $\aggr{a_v\to B}$ is monotonic by \Cref{lem:mono-con-agg} and that the induction hypothesis implies
    $\semantics{\FT|_{n'}}{w}{} \semileq \semantics{\FT|_{n}}{w}{}$.

    As the sARS is deterministic, $\Phi(a)$ just consists of $\FT$ and its prefixes. This
    implies
    $\semisup \{\semantics{\FT|_{n}}{}{} \mid n \in \IN\} = \semelem{a}$.
\end{proof}

\Loops*

\begin{proof}
    Since $v_0 \neq r$, we have $a \notin\NFto$.
    Hence, 
    by replacing the leaf $v_0$ with the finite tree $\F{T}$ repeatedly,
    one obtains an infinite sequence of finite trees $\FT = \F{T}_1, \F{T}_2, \ldots$ with
    $\semantics{\FT_n}{}{} \semigeq \semiplusbig_{i = 1}^{n} t$ for all $n \in \IN$.

    To prove this by induction on $n$, note that $a \notin\NFto$
    implies $\semantics{\FT_1}{}{} = \semantics{\FT}{}{} = \PP_{v_0}(\F{T})(\seminull) \semigeq \seminull
    \semiplus t = t$. For $n > 1$, we have
    $\semantics{\FT_n}{}{} =
    \PP_{v_0}(\FT)(\semantics{\FT_{n-1}}{}{})
    \semigeq  \semantics{\FT_{n-1}}{}{} \semiplus t
    \semigeq  \semiplusbig_{i = 1}^{n-1} t \semiplus t
    = \semiplusbig_{i = 1}^{n} t$ by the
    induction hypothesis and the monotonicity of $\semiplus$ (\Cref{lem:mono-con-agg}).

    Thus, for the infinite extension of this construction we have
    \begin{align*}
    \semelem{a} \quad =& \quad \semisup \{ \semantics{\FT}{}{} \mid \FT \in \Phi(a) \}\\
    \semigeq& \quad \semisup \{ \semantics{\FT_1}{}{}, \semantics{\FT_2}{}{}, \ldots \} \tag{by
        \Cref{cor:approx} }\\
    \semigeq& \quad \semiplusbig_{i = 1}^{\infty} t \quad = \quad \semitop \tag{by \Cref{def:infinite_operations}}
    \end{align*}
\end{proof}

\begin{lemma}[Complete Semilattice Versus Complete Lattice] \label{lem:semilattice}
    Let $(S,\leq)$ be a partially ordered set where the supremum exists
    for every subset, i.e., $\semisup T \in S$ for all $T\subseteq S$.
    Then the infimum exists as well for every subset, i.e.,
    $\semiinf T \in S$ for all $T\subseteq S$.
\end{lemma}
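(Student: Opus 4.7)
The plan is to apply the standard order-theoretic duality: define the infimum of $T$ as the supremum of its set of lower bounds, and verify that this candidate indeed satisfies the universal property of the infimum.

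More precisely, given an arbitrary $T \subseteq S$, I would first introduce the set of all lower bounds
\[
L \;=\; \{s \in S \mid s \leq t \text{ for all } t \in T\}.
\]
By the assumption of the lemma, $\ell := \semisup L$ exists in $S$. I claim that $\ell = \semiinf T$, and I would verify the two defining properties of the infimum in turn. First, $\ell$ is itself a lower bound of $T$: for any fixed $t \in T$, every $s \in L$ satisfies $s \leq t$ by definition of $L$, so $t$ is an upper bound of $L$; since $\ell$ is the \emph{least} upper bound of $L$, we conclude $\ell \leq t$. As $t \in T$ was arbitrary, $\ell \in L$. Second, $\ell$ is the greatest such lower bound: any other lower bound $s$ of $T$ lies in $L$ by definition, and hence $s \leq \ell$ because $\ell$ is an upper bound of $L$. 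Together these show $\ell = \semiinf T$, so the infimum exists.

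I would also briefly address the corner case $T = \emptyset$: here every element of $S$ is vacuously a lower bound, so $L = S$, and $\semiinf \emptyset = \semisup S$, which exists by the hypothesis applied to the subset $S$ itself. Symmetrically, if $L$ happened to be empty, then $\semisup L = \semisup \emptyset$ would have to be the minimum of $S$ (since every element of $S$ is vacuously an upper bound of $\emptyset$), and this minimum is automatically a lower bound of $T$, contradicting $L = \emptyset$; hence $L$ is never empty when the hypothesis is in force.

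There is no real obstacle here — the argument is a classical and short piece of order theory. The only subtlety worth flagging is being careful about empty sets (both $T = \emptyset$ and $L = \emptyset$), since the suprema of such sets refer to the extremal elements of $S$; once one observes that these extremal elements are themselves provided by the hypothesis (as $\semisup \emptyset$ and $\semisup S$), the proof goes through cleanly without any further machinery.
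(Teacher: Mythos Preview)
Your proposal is correct and follows essentially the same approach as the paper: define the candidate infimum as the supremum of the set of lower bounds and verify the two defining properties in the same way. The paper's proof does not explicitly treat the edge cases $T=\emptyset$ and $L=\emptyset$, but your additional remarks on these are sound and do not change the overall argument.
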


\begin{proof}
    Given an arbitrary set $T\subseteq S$, we define the set
    $\text{lb}(T) = \{ \underline{t}\in S \mid \underline{t}~\leq~t
    \text{ for all } t~\in~T \}$ that consists of all \emph{lower bounds} of $T$
    and $\text{ub}(T) = \{ \overline{t}\in S \mid t \leq \overline{t}
    \text{ for all } t\in T \}$ that consists of all \emph{upper bounds} of $T$.
    Then $u = \semisup \text{lb}(T)$ exists, and
    we show that $u$ is the greatest lower bound of $T$.
    \begin{enumerate}
        \item For all $t \in T$ and every $\underline{t} \in \text{lb}(T)$, we have $\underline{t} \leq t$, 
        and thus, $t$ is also an upper bound on $\text{lb}(T)$, i.e., $t \in \text{ub}(\text{lb}(T))$, 
        hence $u \leq t$, since $u$ is the least such upper bound. 
        This implies that $u$ is a lower bound of $T$, i.e., $u \in \text{lb}(T)$.

        \item Given $u' \in \text{lb}(T)$, we have $u' \leq u$ since $u \in \text{ub}(\text{lb}(T))$.
        Hence, $u$ is the greatest lower bound of $T$.
    \end{enumerate}
\end{proof}

\begin{lemma}[Uncountably Many Reduction Trees] \label{lem:nondeterminism}
    There exists a finitely non-deterministic 
    ARS $(A,\to)$ which leads to uncountably many
    reduction sequences starting with a fixed initial object $a \in A$.

    Moreover, there exists a wARS $\wARS$ such that there are uncountably many
    reduction trees for a fixed initial object $a \in A$,
    and each such reduction tree has a different weight.
\end{lemma}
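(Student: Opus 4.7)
The plan is to exhibit explicit witness systems for both parts of the lemma.

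For part one, I would take the ARS $(A, \to)$ with $A = \{a, b\}$ and reduction relation $\{a \to a,\, a \to b,\, b \to a,\, b \to b\}$. Each object has exactly two successors, so the ARS is finitely non-deterministic. Moreover, every function $c : \IN_{\geq 1} \to \{a, b\}$ induces a valid infinite reduction sequence $a \to c(1) \to c(2) \to \cdots$ from $a$, because all four transitions are enabled regardless of the current state. Distinct choices of $c$ yield distinct reduction sequences, hence there are at least $|\{a,b\}^{\IN}| = 2^{\aleph_0}$ reduction sequences starting at $a$.

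For part two, note that a countable sARS that is finitely non-deterministic and finitely branching admits only countably many finite-depth reduction trees rooted at a fixed object, since each such tree is a finite labeled object over countable data. Therefore I would exploit infinite non-determinism together with an uncountable semiring carrier. Let $A = \{a\} \cup \{b_r \mid r \in \IR_{\geq 0}\}$ with a rule $a \to [b_r]$ for each $r \in \IR_{\geq 0}$, so that $\NFto = \{b_r \mid r \in \IR_{\geq 0}\}$. Over the complete lattice semiring $\semirealext$ from \Cref{table:semirings}, use the interpretation $\fNF(b_r) = r$ and the identity aggregator $\aggr{a \to [b_r]} = v_1$. For each $r \in \IR_{\geq 0}$, the depth-$1$ reduction tree $\FT_r \in \Phi(a)$ with root labeled $a$ and single child labeled $b_r$ then satisfies $\semantics{\FT_r}{}{} = r$, so varying $r$ produces uncountably many reduction trees with pairwise distinct weights.

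Since the entire argument is a construction, I do not anticipate any genuine obstacle. The only routine checks are that the ARS in part one is indeed finitely non-deterministic (immediate from $|\{b \mid a' \to b\}| = 2$ for each $a' \in A$) and that the wARS in part two satisfies \Cref{def:wARS}, i.e., $\maxVar(\aggr{a \to [b_r]}) = 1 = |[b_r]|$ and $\semirealext$ is a complete lattice semiring; both hold by inspection of the definitions.
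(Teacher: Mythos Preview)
Your construction for part one is essentially the paper's: the paper uses $(\{0,1\},\to)$ with all four transitions, which is your system up to relabeling.

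For part two you take a genuinely different route. The paper keeps the sARS finite and finitely non-deterministic, using $A=\{0,1,*\}$ with rules $i\to[j,*]$ for $i,j\in\{0,1\}$, the formal language semiring $\semilang$ over $\{0,1\}$, $\fNF(*)=\{\varepsilon\}$, and aggregators that prepend the current symbol. The uncountably many trees then arise from the uncountably many \emph{infinite} reduction trees (one per infinite binary string), and each such tree gets as weight the set of finite prefixes of its string, which distinguishes them. Your construction instead exploits \emph{infinite non-determinism} over an uncountable carrier $A$, so that already the depth-$1$ trees in $\Phi(a)$ are uncountable and have pairwise distinct weights in $\semirealext$. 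Your version has the advantage that it stays entirely within finite-depth trees, where $\semantics{\FT}{}{}$ is actually defined in \Cref{def:semiringSemantics}; the paper's version has the advantage of matching the motivating remark before the lemma (uncountably many \emph{schedulers} even for a small finitely non-deterministic system), at the cost of tacitly extending the weight to infinite-depth trees. One small wrinkle in your construction: the single-node tree with root $a$ has weight $\seminull=0=\semantics{\FT_0}{}{}$, so not literally every tree in $\Phi(a)$ has a unique weight; restricting to $r\in\IR_{>0}$, or simply noting that the uncountable family $(\FT_r)_{r\geq 0}$ already witnesses the claim, disposes of this.
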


\begin{proof}
    For the first part of the lemma,
    consider the ARS $(\{0,1\}, \to)$ with $0 \to 0$, $0 \to 1$, $1 \to 0$, and $1 \to 1$.
    Then the infinite reduction sequences correspond to all infinite sequences of $0$'s and
    $1$'s. Hence, there is a bijection between the set of infinite reduction sequences starting with
    $0$ and the set $2^{\IN}$ which is known to be uncountable.

    For the second part, consider the sARS 
    $(A, \to) = (\{0, 1, *\}, \to)$ where the relation $\to$ is given by the following rules:
    $$ 0 \to [0,*], \quad 0 \to [1,*], \quad 1 \to [0,*], \quad 1 \to [1,*].$$
    Then, the reduction trees with root $0$ have the form:

    \begin{center}
    \begin{tikzpicture}[level distance=20pt,level 1/.style={sibling distance=50pt},level 2/.style={sibling distance=20pt}]
        \tikzstyle{myRect}=[thick,draw=black!100,fill=white!100,minimum size=4mm, shape=rectangle]
        \tikzstyle{trueRect}=[thick,draw=black!100,fill=lipicsYellow!100,minimum size=4mm, shape=rectangle]

        \node {0}
            child {node {0}
                child {node[label=below:\vdots] {0}}
                child {node {*}}
            }
            child {node {*}};
    \end{tikzpicture}
    \begin{tikzpicture}[level distance=20pt,level 1/.style={sibling distance=50pt},level 2/.style={sibling distance=20pt}]
        \tikzstyle{myRect}=[thick,draw=black!100,fill=white!100,minimum size=4mm, shape=rectangle]
        \tikzstyle{trueRect}=[thick,draw=black!100,fill=lipicsYellow!100,minimum size=4mm, shape=rectangle]

        \node {0}
            child {node {0}
                child {node[label=below:\vdots] {1}}
                child {node {*}}
            }
            child {node {*}};
    \end{tikzpicture}
    \begin{tikzpicture}[level distance=20pt,level 1/.style={sibling distance=50pt},level 2/.style={sibling distance=20pt}]
        \tikzstyle{myRect}=[thick,draw=black!100,fill=white!100,minimum size=4mm, shape=rectangle]
        \tikzstyle{trueRect}=[thick,draw=black!100,fill=lipicsYellow!100,minimum size=4mm, shape=rectangle]

        \node {0}
            child {node {1}
                child {node[label=below:\vdots] {0}}
                child {node {*}}
            }
            child {node {*}};
    \end{tikzpicture}
    \begin{tikzpicture}[level distance=20pt,level 1/.style={sibling distance=50pt},level 2/.style={sibling distance=20pt}]
        \tikzstyle{myRect}=[thick,draw=black!100,fill=white!100,minimum size=4mm, shape=rectangle]
        \tikzstyle{trueRect}=[thick,draw=black!100,fill=lipicsYellow!100,minimum size=4mm, shape=rectangle]

        \node {0}
            child {node {1}
                child {node[label=below:\vdots] {1}}
                child {node {*}}
            }
            child {node {*}};
    \end{tikzpicture}
    \begin{tikzpicture}[level distance=20pt,level 1/.style={sibling distance=50pt},level 2/.style={sibling distance=20pt}]
        \tikzstyle{myRect}=[thick,draw=black!100,fill=white!100,minimum size=4mm, shape=rectangle]
        \tikzstyle{trueRect}=[thick,draw=black!100,fill=lipicsYellow!100,minimum size=4mm, shape=rectangle]

        \node {\ldots};
    \end{tikzpicture} 
    \end{center}
\noindent
    So each infinite tree represents an infinite string of $0$'s and $1$'s and
    hence, we have $|\Phi(0)| = |2^{\IN}|$.

    To give a corresponding wARS where all these trees have different weight, 
    consider the formal languages semiring $\semilang$ over $\Sigma = \{0,1\}$,
    the interpretation of the normal forms $\fNF(*) = \{\varepsilon\}$,
    and the aggregator functions
    \begin{align*}
        \mathsf{Aggr}_{0 \to [0,*]}(x, x_{*}) = \mathsf{Aggr}_{0 \to [1,*]}(x, x_{*}) 
        &= \left( \{0\} \semimult_{\semilang} x \right) \semiplus_{\semilang} x_{*} = \{0w \mid w \in x\} \cup x_{*}\\
        \mathsf{Aggr}_{1 \to [0,*]}(x, x_{*}) = \mathsf{Aggr}_{1 \to [1,*]}(x, x_{*}) 
        &= \left( \{1\} \semimult_{\semilang} x \right) \semiplus_{\semilang} x_{*} = \{1w \mid w \in x\} \cup x_{*}\!
    \end{align*}
    For any reduction tree $\FT$, now the set $\semantics{\FT}{}{}$ consists of all finite prefixes of the infinite string represented by the tree.
    Hence, we have $\semantics{\FT}{}{} \neq \semantics{\FT'}{}{}$ for all $\FT, \FT' \in \Phi(0)$ with $\FT \neq \FT'$.
    In other words, we have a wARS where $\Phi(0)$ contains uncountably many trees 
    that all have a different weight.
\end{proof}
}
\end{document}